\titleformat*{\section}{\bf\large}
\theoremstyle{definition}
\newtheorem{theorem}{Theorem}
\newtheorem{proposition}{Proposition}
\newtheorem{lemma}{Lemma}
\newtheorem{example}{Example}
\newtheorem{remark}{Remark}
\apptocmd{\sloppy}{\hbadness 10000\relax}{}{} 
\def\var{\textnormal{var}}
\def\cov{\textnormal{cov}}
\def\rank{\textnormal{rank}}
\def\obs{\textnormal{obs}}
\def\bt{\textnormal{bt}}
\def\win{\textnormal{in}}
\def\digamma{\text{digamma}}
\def\-{\mbox{-}}
\title{\bf Causal Inference from Possibly Unbalanced Split-Plot Designs: A Randomization-based Perspective}
\author[*]{Rahul Mukerjee}
\author[**]{Tirthankar Dasgupta}
\affil[*]{Indian Institute of Management Calcutta, Joka, Diamond Harbour Road, Kolkata 700104, India, email: rmuk0902@gmail.com}
\affil[***]{Department of Statistics, Rutgers University, 110 Frelinghuysen Road, Piscataway, New Jersey 08901, U.S.A., email: tirthankar.dasgupta@rutgers.edu}
\begin{document}

\date{}

\doublespacing

\maketitle

\begin{abstract}
\frenchspacing
Split-plot designs find wide applicability in multifactor experiments with randomization restrictions. Practical considerations often warrant the use of unbalanced designs. This paper investigates randomization based causal inference in split-plot designs that are possibly unbalanced. Extension of ideas from the recently studied balanced case yields an expression for the sampling variance of a treatment contrast estimator as well as a conservative estimator of the sampling variance. However, the bias of this variance estimator does not vanish even when the treatment effects are strictly additive. A careful and involved matrix analysis is employed to overcome this difficulty, resulting in a new variance estimator, which becomes unbiased under milder conditions. A construction procedure that generates such an estimator with minimax bias is proposed. 
\end{abstract}

\textbf{Keywords:} 
Bias; Factorial experiment; Finite population; Minimaxity; Treatment-effect additivity.

\section{Introduction}
\label{sec:intro}
\frenchspacing

Factorial experiments were originally developed in the context of agricultural experiments \citep{Fisher1925,Fisher1935,Yates::1935} and later extensively used in industrial and engineering applications \citep{WandH::2009}. Such experiments have currently been undergoing a third popularity surge among social, behavioral, and biomedical sciences. However, one of the key challenges of using standard principles of designing and analyzing factorial experiments in these fields arises from randomization restrictions. Consider a simplified version of the education experiment described in \cite{DPR::2015}. Suppose the goal is to assess the causal effects of two interventions (referred to as factors in experimental design literature) -- $F_1$: a mid-year quality review by a team of experts, and $F_2$: a bonus scheme for teachers -- on the performances of 40 schools in the state of New York. Each factor has two levels denoted by 1 (application) and 0 (non-application). A completely randomized assignment of the 40 schools to the four treatment combinations $00, 01, 10, 11$ is likely to disperse the schools assigned to level 1 of factor $F_1$ (i.e., schools to undergo review) all over the state. Such a design may be prohibitive from the consideration of travel cost and time. A more practical alternative would be to divide these 40 schools by geographic proximity into four groups called whole-plots. Two of these whole-plots would then be randomly assigned to level 0 and the other two to level 1 of factor $F_1$. The teacher bonus scheme can then be applied to half of the schools chosen randomly within each whole-plot. Such a randomization scheme is an example of a classic split-plot design. See \cite{Kirk::1982}, \cite{CC::1957}, \cite{BHH::2005}, and \cite{WandH::2009} for formal definitions.

Randomization-based inference is the most natural methodology to draw inference on causal effects of treatments from split-plot experiments in a finite population setting, as observed by \cite{Freedman2006,Freedman2008a}. Recently, \cite{zhaosplitplot::2018} developed a framework for randomization-based estimation procedure of finite-population causal effects for balanced split-plot designs, in which each whole-plot consists of the same number of units or sub-plots, and any treatment combination of the sub-plot factors occurs equally often in all whole-plots; vide (\ref{eq:balance_condition}) below. However, unbalanced split-plot designs are quite common in the social sciences. Consider the school experiment described earlier. Suppose the 40 schools are spread over four counties with 8, 8, 12 and 12 schools in these counties. In this case, each county can be considered as a natural whole-plot. Thus the design is unbalanced and the estimation methodology proposed by \cite{zhaosplitplot::2018} is no longer applicable.

In this paper we investigate randomization based causal inference in split-plot designs that are possibly unbalanced, using the potential outcomes framework \citep{neyman::1923,Rubin1974,Rubin1978,Rubin2005}. We start with a natural unbiased estimator of a typical treatment contrast and first examine how far the approach of \cite{zhaosplitplot::2018} for the balanced case can be adapted to our more general setup. It is seen that this approach, aided by a variable transformation, yields an expression for the sampling variance of the treatment contrast estimator but runs into difficulty in variance estimation. Specifically, as in the balanced case and other situations in causal inference, the resulting variance estimator is conservative in the sense of having a nonnegative bias. However, unlike in most standard situations, the bias does not vanish even under strict additivity or homegeneity of treatment effects. To overcome this problem, a careful matrix analysis is employed leading, under wide generality, to a new variance estimator. This estimator is also conservative, but enjoys the nice property of becoming unbiased under between-whole-plot additivity, a condition even milder than strict additivity. We also discuss the issue of minimaxity, with a view to controlling the bias in variance estimation and explore the bias of the estimator under treatment effect heterogeneity via simulations. 

\section{Treatment contrast and its unbiased estimation}
\label{sec:contrast_estimation}

Consider a factorial experiment conducted to assess causal effects of $m_1$ whole-plot factors $F_{11}, \ldots, F_{1 m_1}$ and $m_2$ sub-plot factors $F_{21}, \ldots, F_{2 m_2}$ on a finite population of $N$ units. Each factor has two or more levels. The treatment combinations are denoted by $z = z_1 z_2$, where $z_k \in  Z_k$ and $Z_k$ is the set of level combinations of $F_{k1}, \ldots, F_{k m_k} \ (k =1,2)$. For $i = 1, \ldots, N$, let $Y_i(z_1 z_2)$ denote the potential outcome of unit $i$ when exposed to treatment combination $z_1z_2$. 
A typical treatment contrast for unit $i$ of the form
\begin{equation}
\tau_i = \sum_{z_1\in Z_1} \sum_{z_2\in Z_2} g(z_1z_2) Y_i(z_1z_2), \label{eq:unitlevel}
\end{equation} 
where $g(z_1 z_2)$, $z_1 \in Z_1, z_2 \in Z_2$ are known, not all zeros, and sum to zero. Let 
\begin{equation}
\overline{Y}(z_1z_2) = N^{-1} \sum_{i=1}^N Y_i(z_1 z_2), \label{eq:popmeanz1z2}
\end{equation}
denote the average potential outcome for treatment combination $z_1 z_2$, and let 
\begin{equation}
\overline{\tau} = N^{-1} \sum_{i=1}^N \tau_i = \sum_{z_1\in Z_1} \sum_{z_2\in Z_2} g(z_1z_2) \overline{Y}(z_1z_2), \label{eq:poplevel}
\end{equation}
denote a treatment contrast for the finite population of $N$ units. We define $\overline{\tau}$ as the finite-population causal estimand of interest and consider the problem of drawing inference on $\overline{\tau}$ using the outcomes observed from the experiment.

The observed outcomes are generated through an assignment mechanism, which is the process of allocating treatment combinations to the $N$ units. Here we consider a split-plot assignment mechanism which can be described as follows. Suppose there is a partitioning of the $N$ experimental units into $W (\ge 2)$ disjoint sets $\Omega_1, \ldots, \Omega_W$, called whole-plots, such that $\Omega_w$ consists of $M_w (\ge 2)$ units, called sub-plots, $w = 1, \ldots, W$, and $M_1 + \ldots M_W = N$. Consider now a two-stage randomization, which assigns $r_1(z_1)$ whole-plots to level combination $z_1$ of $F_{11}, \ldots F_{1 m_1}$ and then, for each $w = 1, \ldots, W$, assigns $r_{w2}(z_2)$ sub-plots within whole-plot $\Omega_w$ to level combination $z_2$ of $F_{21}, \ldots F_{2 m_2}$. Here at each stage all assignments are equiprobable, the $r_1(z_1)$ and $r_{w2}(z_2)$ are fixed positive integers, and $\sum_{z_1 \in Z_1} r_1(z_1) = W$, $\sum_{z_2 \in Z_2} r_{w2}(z_2) = M_w$ for $w = 1, \ldots, W$.
    
Note that the above assignment mechanism yields a \textit{balanced} split-plot design if
\begin{equation}
M_1 = \cdots = M_W, \quad r_{12}(z_2) = \cdots = r_{W2}(z_2), \ \mbox{for all} \ z_2 \in Z_2. \label{eq:balance_condition}
\end{equation}
In the school example described in Section \ref{sec:intro}, the whole-plots represent sets of schools within a county and we have $N = 40$, $W = 4$, $M_1 = M_2 =8$, $M_3 = M_4 = 12$, $Z_1 = Z_2 = \{0,1\}$. Finally, for all $z_2 \in Z_2$, $r_{w2}(z_2) = 4$ for $w = 1,2$ and $r_{w2}(z_2) = 6$ for $w = 3,4$. Thus, the design is unbalanced.

To define the observed outcomes of the experiment, we introduce two sets of random treatment assignment indices at the whole-plot and the sub-plot levels. Let $T_1(z_1)$ denote the set of indices $w$ such that whole-plot $\Omega_w$ is randomly assigned to level combination $z_1$ of $F_{11}, \ldots, F_{1 m_1}$. Similarly, for $z_2 \in Z_2$ and $w = 1, \ldots, W$, let $T_{w2}(z_2)$ be the set of sub-plots in $\Omega_w$ randomly assigned to level combination $z_2$ of $F_{21}, \ldots, F_{2 m_2}$. 
For any treatment combination $z_1 z_2$, the observed outcomes from the whole-plot $\Omega_w$, $w \in T_1(z_1)$, are then $Y_i(z_1 z_2)$, $i \in T_{w2}(z_2)$. Let
\begin{equation}
\overline{Y}_w^{\obs}(z_1 z_2) = \left\{ r_{w2}(z_2) \right\}^{-1} \sum_{i \in T_{w2}(z_2)} Y_i(z_1 z_2), \label{eq:WPmean_estimator} 
\end{equation}
denote the average observed outcome for treatment combination $z_1 z_2$ within whole-plot $\Omega_w$ for $w \in T_1(z_1)$. In the spirit of the usual unbiased estimator of the population mean in two-stage sampling \citep{cochran::1977}, define
\begin{equation}
\overline{Y}^{\obs} (z_1 z_2) = \frac {W}{N r_1(z_1)} \sum_{w \in T_1(z_1)} M_w \overline{Y}_w^{\obs}(z_1 z_2) =  \frac{1} {r_1(z_1)} \sum_{w \in T_1(z_1)} \frac{M_w}{\overline{M}} \overline{Y}_w^{\obs}(z_1 z_2), \label{eq:mean_estimator}
\end{equation}
where $\overline{M} = (M_1 + \ldots + M_W)/W = N/W$ is the average whole-plot size. From (\ref{eq:WPmean_estimator}) and (\ref{eq:mean_estimator}), it is straightforward to verify by conditioning on the randomization at the whole-plot level that $E \left\{ \overline{Y}^{\obs}(z_1 z_2) \right\} = \overline{Y}(z_1 z_2)$, where $\overline{Y}(z_1 z_2)$ is given by (\ref{eq:popmeanz1z2}). Using (\ref{eq:poplevel}), an immediate consequence of this fact is Proposition \ref{prop:unbiased}.

\begin{proposition} \label{prop:unbiased}
An unbiased estimator of the finite population treatment contrast $\overline{\tau}$ is given by
\begin{equation}
\widehat{\overline{\tau}} = \sum_{z_1\in Z_1} \sum_{z_2\in Z_2} g(z_1z_2) \overline{Y}^{\obs} (z_1 z_2), \label{eq:taubarest} 
\end{equation}
where $\overline{Y}^{\obs} (z_1 z_2)$ is given by (\ref{eq:mean_estimator}).
\end{proposition}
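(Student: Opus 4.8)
The plan is to derive the result directly from the unbiasedness of each component estimator $\overline{Y}^{\obs}(z_1 z_2)$, which is stated in the text immediately preceding the proposition. Since the coefficients $g(z_1 z_2)$ are fixed known constants, linearity of expectation gives
\[
E\bigl\{\widehat{\overline{\tau}}\bigr\} = \sum_{z_1 \in Z_1} \sum_{z_2 \in Z_2} g(z_1 z_2)\, E\bigl\{\overline{Y}^{\obs}(z_1 z_2)\bigr\} = \sum_{z_1 \in Z_1} \sum_{z_2 \in Z_2} g(z_1 z_2)\, \overline{Y}(z_1 z_2) = \overline{\tau},
\]
where the last equality invokes (\ref{eq:poplevel}). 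Thus the whole content of the proposition reduces to the componentwise identity $E\{\overline{Y}^{\obs}(z_1 z_2)\} = \overline{Y}(z_1 z_2)$, and I would devote the remainder of the argument to making that identity precise rather than merely citing it.

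To establish it I would use the tower property, exploiting the two-stage structure of the split-plot randomization. First I would condition on the whole-plot assignment, that is, on the set $T_1(z_1)$. For a fixed $w \in T_1(z_1)$, the sub-plot indices $T_{w2}(z_2)$ form a simple random sample of $r_{w2}(z_2)$ of the $M_w$ units in $\Omega_w$, so each unit $i \in \Omega_w$ enters $T_{w2}(z_2)$ with probability $r_{w2}(z_2)/M_w$. Substituting this into (\ref{eq:WPmean_estimator}) and cancelling the factor $r_{w2}(z_2)$ yields
\[
E\bigl\{\overline{Y}_w^{\obs}(z_1 z_2) \mid T_1(z_1)\bigr\} = M_w^{-1} \sum_{i \in \Omega_w} Y_i(z_1 z_2),
\]
the within-whole-plot average, which I will write $\overline{Y}_w(z_1 z_2)$. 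The conditional expectation of $\overline{Y}^{\obs}(z_1 z_2)$ given $T_1(z_1)$ then follows from (\ref{eq:mean_estimator}) on replacing each $\overline{Y}_w^{\obs}$ by the non-random $\overline{Y}_w$.

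For the outer expectation I would note that $T_1(z_1)$ is itself a simple random sample of $r_1(z_1)$ of the $W$ whole-plots, so each index $w$ is included with probability $r_1(z_1)/W$. Taking expectations over the whole-plot stage and using $M_w \overline{Y}_w(z_1 z_2) = \sum_{i \in \Omega_w} Y_i(z_1 z_2)$ makes the sum over whole-plots collapse into $\sum_{i=1}^N Y_i(z_1 z_2) = N \overline{Y}(z_1 z_2)$; the leading factor $W / \{N r_1(z_1)\}$ in (\ref{eq:mean_estimator}) then cancels both the inclusion probability and the $N$, leaving exactly $\overline{Y}(z_1 z_2)$. I do not expect a serious obstacle here: the only point demanding care is that the whole-plots have unequal sizes $M_w$, so the weighting by $M_w$ must be tracked faithfully through both stages. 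It is precisely these design-specific weights $M_w/\overline{M}$, equivalently the factor $W M_w / \{N r_1(z_1)\}$, that compensate for the imbalance and render the estimator unbiased, whereas the unweighted within-whole-plot averages would not suffice in the unbalanced setting.
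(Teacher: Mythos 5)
Your proof is correct and follows essentially the same route as the paper, which disposes of Proposition \ref{prop:unbiased} by noting that $E\{\overline{Y}^{\obs}(z_1 z_2)\} = \overline{Y}(z_1 z_2)$ "by conditioning on the randomization at the whole-plot level" and then invoking linearity via (\ref{eq:poplevel}). You have simply written out in full the two-stage conditioning argument (simple random sampling at each stage, with the $M_w/\overline{M}$ weights cancelling the inclusion probabilities) that the paper labels "straightforward."
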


\section{Sampling variance and its estimation generalizing the balanced case}
\label{sec:var_taubarhat}

Proposition \ref{prop:unbiased} yields a point estimator of $\overline{\tau}$. However, to quantify the uncertainty associated with the point estimator and draw inference on $\overline{\tau}$, one needs to derive and estimate the sampling variance of $\widehat{\overline{\tau}}$ with respect to its distribution induced by the randomization in the split-plot design. \citet{zhaosplitplot::2018} derived an expression for the sampling variance of $\widehat{\overline{\tau}}$ for a balanced split-plot design, that is, when conditions (\ref{eq:balance_condition}) are satisfied. They also obtained an estimator of the sampling variance that, like most variance estimators in finite population causal inference \citep{MDR::2018}, has a nonnegative bias. Further, they noted that this bias vanishes under between-whole-plot additivity, that is, average treatment effect homogeneity at the whole-plot level. In this Section, we derive an expression for the sampling variance and find a variance estimator generalizing the arguments in \cite{zhaosplitplot::2018} to the unbalanced case, and examine the properties of the estimator. To that end, we first convert the ``raw'' potential outcomes $Y_i(z_1 z_2)$ to ``adjusted'' potential outcomes
\begin{equation}
U_i(z_1 z_2) = (M_w / \overline{M}) Y_i(z_1 z_2), \label{eq:adjusted_PO}
\end{equation}
for each $z_1 \in Z_1$, $z_2 \in Z_2$, $i \in \Omega_w$ and $w = 1, \ldots, W$. An intuition behind this adjustment will be provided shortly, after we introduce its observed version.
 
For each $z_1 z_2$, define $\overline{U}_w(z_1z_2) = M_w^{-1} \sum_{i \in \Omega_w} U_i(z_1 z_2), \ w=1, \ldots, W$, and $\overline{U}(z_1 z_2) = W^{-1} \sum_{w=1}^W \overline{U}_w (z_1 z_2)$. By (\ref{eq:adjusted_PO}), $\overline{U}(z_1 z_2) = \overline{Y}(z_1 z_2)$. Next, for $z_1, z_1^* \in Z_1$ and $z_2, z_2^* \in Z_2$, define
\begin{eqnarray*}
S_{\bt}(z_1 z_2, z_1^* z_2^*) = \frac{\overline{M}}{W-1} \sum_{w=1}^W \left\{ \overline{U}_w(z_1z_2) - \overline{U}(z_1 z_2) \right\}  \left\{ \overline{U}_w(z_1^* z_2^*) - \overline{U}(z_1^* z_2^*) \right\}, \label{eq:Sbet} \\
S_{\win, w}(z_1 z_2, z_1^* z_2^*) = \frac{1}{M_w-1}  \sum_{i \in \Omega_w} \left\{ U_i(z_1 z_2) - \overline{U}_w(z_1z_2) \right\} \left\{ U_i(z_1^* z_2^*) - \overline{U}_w(z_1^* z_2^*) \right\}. \label{eq:Sin} 
\end{eqnarray*}
In the balanced case, $S_{\bt}(z_1 z_2, z_1^* z_2^*)$ and $W^{-1} \sum_{w=1}^W S_{\win, w}(z_1 z_2, z_1^* z_2^*)$ represent, respectively, the between and within whole-plot mean squares or products in an analysis of variance/covariance decomposition of the potential outcomes.

It is also important to define a measure of heterogeneity of treatment contrasts across the whole-plots. First, Let
\begin{equation}
\overline{\tau}_w = (1/M_w) \sum_{i \in \Omega_w} \tau_i = \sum_{z_1 \in Z_1} \sum_{z_2 \in Z_2} g(z_1z_2) \overline{Y}_w(z_1 z_2), \ w=1, \ldots, W, \label{eq:WPtau}
\end{equation} 
denote the whole-plot level treatment contrasts, where 
$ \overline{Y}_w(z_1 z_2) =  M_w^{-1} \sum_{i \in \Omega_w} Y_i(z_1 z_2) $
is the average potential outcome of all units in whole-plot $\Omega_w$ for treatment combination $z_1 z_2$. The second equality in (\ref{eq:WPtau}) follows from (\ref{eq:unitlevel}). Also, from (\ref{eq:poplevel}) and (\ref{eq:WPtau}), it follows that
\begin{equation}
\overline{\tau} = (1/W) \sum_{w=1}^W (M_w/\overline{M}) \overline{\tau}_w. \label{eq:connect_tau}
\end{equation}
Now define the following measure of heterogeneity of treatment contrasts across the whole-plots:
\begin{equation}
\Delta = \frac{1}{W(W-1)} \sum_{w=1}^W \left\{ (M_w/\overline{M}) \overline{\tau}_w - \overline{\tau} \right\}^2, \label{eq:delta}
\end{equation}
where $\overline{\tau}_w$ is given by (\ref{eq:WPtau}). Then, extending the ideas of \citet{zhaosplitplot::2018}, after considerable algebra, we obtain the following result on the sampling variance of $\widehat{\overline{\tau}}$, the unbiased estimator of $\overline{\tau}$.

\begin{theorem} \label{thm:tauhat_variance}
The sampling variance of $\widehat{\overline{\tau}}$ is
\begin{eqnarray*}
\var(\widehat{\overline{\tau}}) &=& \sum_{z_1 \in Z_1} \sum_{z_2 \in Z_2} \sum_{z_2^* \in Z_2} \frac{g(z_1 z_2) g(z_1 z_2^*)}{r_1(z_1)} \left( \frac{S_{\bt}(z_1 z_2, z_1 z_2^*)}{\overline{M}} - \sum_{w=1}^W \frac{S_{\win, w}(z_1 z_2, z_1 z_2^*)}{W M_w} \right) \\
&+& \sum_{z_1 \in Z_1} \sum_{z_2 \in Z_2} \frac{ \left\{g(z_1 z_2) \right\}^2}{W r_1(z_1)} \sum_{w=1}^W \frac{S_{\win, w}(z_1 z_2, z_1 z_2)}{r_{w2}(z_2)} - \Delta.
\end{eqnarray*}
\end{theorem}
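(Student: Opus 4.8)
The plan is to exploit the two stages of the randomization through the law of total variance, conditioning on the first-stage (whole-plot) assignment $\mathcal{F}_1 = \{T_1(z_1) : z_1 \in Z_1\}$, so that $\var(\widehat{\overline{\tau}}) = E\{\var(\widehat{\overline{\tau}} \mid \mathcal{F}_1)\} + \var\{E(\widehat{\overline{\tau}} \mid \mathcal{F}_1)\}$. The adjusted outcomes are what make this tractable: since $(M_w/\overline{M})\overline{Y}^{\obs}_w(z_1 z_2)$ equals the observed within-whole-plot mean $\overline{U}^{\obs}_w(z_1 z_2) = \{r_{w2}(z_2)\}^{-1}\sum_{i\in T_{w2}(z_2)} U_i(z_1 z_2)$ of the adjusted outcomes, the weights $M_w/\overline{M}$ are absorbed and $\widehat{\overline{\tau}} = \sum_{z_1} r_1(z_1)^{-1}\sum_{w\in T_1(z_1)}\sum_{z_2} g(z_1 z_2)\overline{U}^{\obs}_w(z_1 z_2)$ is an unweighted average of whole-plot quantities. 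This displays the estimator as two nested completely randomized designs. The entire computation then rests on two standard finite-population identities: for a completely randomized assignment of $n$ units into groups of sizes $n_t$ with fixed values $x_i(t)$, the group sample means obey $\var\{\bar x^{\obs}(t)\} = S_{tt}(n_t^{-1} - n^{-1})$ and $\cov\{\bar x^{\obs}(t), \bar x^{\obs}(t')\} = -S_{tt'}/n$ for $t \neq t'$, with $S_{tt'}$ the finite-population (co)variance. I would apply these once at the sub-plot level and once at the whole-plot level.

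First I would treat $E\{\var(\widehat{\overline{\tau}} \mid \mathcal{F}_1)\}$. Given $\mathcal{F}_1$ the second-stage randomizations in distinct whole-plots are independent, so $\var(\widehat{\overline{\tau}} \mid \mathcal{F}_1) = \sum_{z_1} r_1(z_1)^{-2}\sum_{w\in T_1(z_1)}\var\{\sum_{z_2} g(z_1 z_2)\overline{U}^{\obs}_w(z_1 z_2)\}$. Within whole-plot $w$ the second stage is a CRD on $M_w$ sub-plots with group sizes $r_{w2}(z_2)$, so the two identities with $n = M_w$ expand the inner variance into a diagonal part $\sum_{z_2} g(z_1 z_2)^2 S_{\win,w}(z_1 z_2, z_1 z_2)/r_{w2}(z_2)$ and an off-diagonal part $-\sum_{z_2}\sum_{z_2^*} g(z_1 z_2)g(z_1 z_2^*) S_{\win,w}(z_1 z_2, z_1 z_2^*)/M_w$. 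Taking the expectation over $\mathcal{F}_1$ uses only that $T_1(z_1)$ is a simple random subset of size $r_1(z_1)$, giving $E\{\sum_{w\in T_1(z_1)} h_w\} = \{r_1(z_1)/W\}\sum_{w=1}^W h_w$; this yields precisely the within-whole-plot contributions of the theorem, that is, the $\sum_w S_{\win,w}/(WM_w)$ piece of the first line together with the whole second line.

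Next I would recognize $E(\widehat{\overline{\tau}} \mid \mathcal{F}_1) = \sum_{z_1} r_1(z_1)^{-1}\sum_{w\in T_1(z_1)} B_w(z_1)$ with $B_w(z_1) = \sum_{z_2} g(z_1 z_2)\overline{U}_w(z_1 z_2)$, since each sample mean is conditionally unbiased for $\overline{U}_w(z_1 z_2)$. The first-stage assignment is itself a CRD of the $W$ whole-plots into the levels $z_1$, with group sizes $r_1(z_1)$ and responses $B_w(z_1)$, so the same identities apply with $n = W$; writing $S^B_{z_1 z_1^*}$ for the finite-population covariance of $B_w(z_1)$ and $B_w(z_1^*)$ across whole-plots, the variance is $\sum_{z_1} S^B_{z_1 z_1}/r_1(z_1) - W^{-1}\sum_{z_1}\sum_{z_1^*} S^B_{z_1 z_1^*}$. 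The first piece collapses, through $B_w(z_1) - \overline{B}(z_1) = \sum_{z_2} g(z_1 z_2)\{\overline{U}_w(z_1 z_2) - \overline{U}(z_1 z_2)\}$ and the definition of $S_{\bt}$, into the $S_{\bt}(z_1 z_2, z_1 z_2^*)/\overline{M}$ part of the first line. For the second piece the decisive simplification is $\sum_{z_1}\{B_w(z_1) - \overline{B}(z_1)\} = (M_w/\overline{M})\overline{\tau}_w - \overline{\tau}$, which follows from (\ref{eq:WPtau}) and (\ref{eq:connect_tau}); hence $\sum_{z_1}\sum_{z_1^*} S^B_{z_1 z_1^*} = W\Delta$ and this piece is exactly $-\Delta$. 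Combining the two contributions and grouping the $S_{\bt}$ and $S_{\win,w}/M_w$ terms under a common sum over $z_1, z_2, z_2^*$ gives the claimed formula.

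I expect the main obstacle to be the disciplined bookkeeping of the off-diagonal covariances at both stages. In particular, because the whole-plot-level coefficients are all unity and do not sum to zero, one cannot shortcut through a unit-level contrast variance and must instead carry the full $z_1 \times z_1^*$ covariance matrix; the non-obvious reward is that exactly these cross terms, via the heterogeneity identity above, telescope into the single correction $-\Delta$, and that the within-stage cross terms likewise assemble into the $S_{\win,w}/M_w$ adjustment rather than cancelling.
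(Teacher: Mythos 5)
Your proposal is correct and takes essentially the same approach as the paper: both decompose the variance by conditioning on the whole-plot-stage assignment, apply the standard finite-population identities for completely randomized assignments at each of the two stages (working with the adjusted outcomes $U_i(z_1z_2)$ so the weights $M_w/\overline{M}$ are absorbed), and show that the cross-$z_1$ terms assemble into the correction $-\Delta$. The only difference is bookkeeping order --- the paper applies the law of total covariance term-by-term to the pairs $\overline{U}^{\obs}(z_1 z_2),\ \overline{U}^{\obs}(z_1^* z_2^*)$ and then sums against $g(z_1z_2)g(z_1^*z_2^*)$, whereas you apply the law of total variance to the scalar $\widehat{\overline{\tau}}$ and expand each piece; the two computations are identical term for term.
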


Next, to obtain an estimator of the sampling variance, we first define the counterparts of $\overline{Y}_w^{\obs}(z_1 z_2)$ and $\overline{Y}^{\obs}(z_1 z_2)$ in (\ref{eq:WPmean_estimator}) and (\ref{eq:mean_estimator}) in terms of the adjusted potential outcomes:
\begin{eqnarray*}
\overline{U}_w^{\obs}(z_1 z_2) &=& \frac{1}{r_{w2}(z_2)} \sum_{i \in T_{w2}(z_2)} U_i(z_1 z_2), \quad w \in T_1(z_1) \quad \mbox{and} \\ \overline{U}^{\obs}(z_1 z_2) &=& \frac{1}{r_1(z_1)} \sum_{w \in T_1(z_1)}\overline{U}_w^{\obs}(z_1 z_2).
\end{eqnarray*}
Then it is easy to see from (\ref{eq:WPmean_estimator}), (\ref{eq:mean_estimator}) and (\ref{eq:adjusted_PO}) that
\begin{equation}
\overline{Y}^{\obs}(z_1 z_2) = \overline{U}^{\obs}(z_1 z_2). \label{eq:Y-u-equality}
\end{equation}
Note that $\overline{U}^{\obs}(z_1 z_2)$ is the simple average of $\overline{U}_w^{\obs}(z_1 z_2)$, $w \in T_1(z_1)$, irrespective of whether $M_1, \ldots, M_W$ are equal or not. This is precisely what the relationship between $\overline{Y}^{\obs}(z_1 z_2)$ and $\overline{Y}_w^{\obs}(z_1 z_2)$ in (\ref{eq:mean_estimator}) reduces to when $M_1 = \cdots = M_W$, providing us with the intuition to generalize the results of \citet{zhaosplitplot::2018} by substituting the potential outcomes by their adjusted version in view of (\ref{eq:Y-u-equality}). We now define the following estimator of the sampling variance in Theorem \ref{thm:tauhat_variance}:
\begin{equation}
\widehat{V}(\widehat{\overline{\tau}}) = \sum_{z_1 \in Z_1} \sum_{z_2 \in Z_2} \sum_{z_2^* \in Z_2} \frac{g(z_1 z_2) g(z_1 z_2^*)}{r_1(z_1)} \widehat{S}(z_1 z_2, z_1 z_2^*), \label{eq:var_est1}
\end{equation}
where
\begin{eqnarray*}
\widehat{S}(z_1 z_2, z_1 z_2^*) = \frac{1}{r_1(z_1) - 1} \sum_{w \in T_1(z_1)} \left\{ \overline{U}_w^{\obs}(z_1 z_2) - \overline{U}^{\obs}(z_1 z_2)  \right\} \left\{ \overline{U}_w^{\obs}(z_1 z_2^*) - \overline{U}^{\obs}(z_1 z_2^*) \right\}. \label{eq:S_hat}
\end{eqnarray*}

These expressions now allow us to work along the lines of \citet{zhaosplitplot::2018} by substituting (\ref{eq:Y-u-equality}) in  (\ref{eq:taubarest}). Again, considerable algebra yields the following result:

\begin{theorem} \label{thm:bias1}
The variance estimator $\widehat{V}(\widehat{\overline{\tau}})$ given by (\ref{eq:var_est1}) estimates the sampling variance of $\widehat{\overline{\tau}}$ with a nonnegative bias $\Delta$ defined by (\ref{eq:delta}), that is, $ E \left\{ \widehat{V}(\widehat{\overline{\tau}}) \right\} = \var(\widehat{\overline{\tau}}) + \Delta$.
\end{theorem}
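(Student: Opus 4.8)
The plan is to exploit the fact that, for each fixed whole-plot level $z_1$, the inner summand of (\ref{eq:var_est1}) collapses into an ordinary sample variance. Writing $A_w(z_1) = \sum_{z_2 \in Z_2} g(z_1 z_2)\,\overline{U}_w^{\obs}(z_1 z_2)$ for the observed whole-plot contrast, and recalling that $\overline{U}^{\obs}(z_1 z_2)$ is the \emph{simple} average of the $\overline{U}_w^{\obs}(z_1 z_2)$ over $w \in T_1(z_1)$, I would complete the square to verify $\sum_{z_2}\sum_{z_2^*} g(z_1 z_2) g(z_1 z_2^*)\,\widehat{S}(z_1 z_2, z_1 z_2^*) = \{r_1(z_1)-1\}^{-1}\sum_{w \in T_1(z_1)}\{A_w(z_1) - \overline{A}(z_1)\}^2$, the sample variance of the $A_w(z_1)$ across the whole-plots assigned to $z_1$, where $\overline{A}(z_1)$ is their sample mean. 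Hence $\widehat{V}(\widehat{\overline{\tau}}) = \sum_{z_1} \{r_1(z_1)\}^{-1}\widehat{s}^2_{z_1}$, and the task reduces to computing $E(\widehat{s}^2_{z_1})$ for each $z_1$.

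I would then evaluate this expectation by iterated conditioning, nesting the sub-plot randomization inside the whole-plot randomization. Conditional on the whole-plot assignment $T_1$, the variables $A_w(z_1)$, $w \in T_1(z_1)$, are independent across whole-plots, since sub-plot randomizations in distinct whole-plots are independent and independent of $T_1$; they have conditional mean $\mu_w(z_1) = \sum_{z_2} g(z_1 z_2)\,\overline{U}_w(z_1 z_2)$ and some conditional variance $\sigma_w^2(z_1)$. The standard identity for the sample variance of independent but non-identically distributed variables gives $E(\widehat{s}^2_{z_1}\mid T_1) = \{r_1(z_1)\}^{-1}\sum_{w \in T_1(z_1)}\sigma_w^2(z_1) + \{r_1(z_1)-1\}^{-1}\sum_{w \in T_1(z_1)}\{\mu_w(z_1)-\overline{\mu}(z_1)\}^2$, with $\overline{\mu}(z_1)$ the sample mean of the $\mu_w(z_1)$. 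Averaging over $T_1$, which draws a simple random sample of $r_1(z_1)$ whole-plots from the $W$ available, the first term becomes $W^{-1}\sum_{w=1}^W \sigma_w^2(z_1)$, while the second becomes the finite-population variance $(W-1)^{-1}\sum_{w=1}^W\{\mu_w(z_1)-\mu_{\cdot}(z_1)\}^2$, because a sample variance is unbiased for its population counterpart under simple random sampling.

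The crux is the conditional variance $\sigma_w^2(z_1) = \sum_{z_2}\sum_{z_2^*} g(z_1 z_2) g(z_1 z_2^*)\,\cov\{\overline{U}_w^{\obs}(z_1 z_2), \overline{U}_w^{\obs}(z_1 z_2^*)\}$, the covariance being taken over the sub-plot randomization within $\Omega_w$. I would obtain it from the fact that $T_{w2}(z_2)$, $z_2 \in Z_2$, form a uniformly random partition of $\Omega_w$ into blocks of fixed sizes $r_{w2}(z_2)$, so the first- and second-order inclusion probabilities are of multivariate-hypergeometric type. Careful bookkeeping of these probabilities should yield the unified formula $\cov\{\overline{U}_w^{\obs}(z_1 z_2),\overline{U}_w^{\obs}(z_1 z_2^*)\} = \delta_{z_2 z_2^*}\,S_{\win,w}(z_1 z_2, z_1 z_2)/r_{w2}(z_2) - S_{\win,w}(z_1 z_2, z_1 z_2^*)/M_w$, with $\delta$ the Kronecker symbol. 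This is the step I expect to be the main obstacle: the disjointness of the sub-plot blocks must be tracked correctly through the off-diagonal ($z_2 \neq z_2^*$) terms, and it is precisely here that the finite-population corrections and the within-whole-plot quantities $S_{\win,w}$ enter.

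Finally I would substitute this covariance into $\sigma_w^2(z_1)$ and identify the between part: since $\mu_\cdot(z_1) = \sum_{z_2} g(z_1 z_2)\,\overline{U}(z_1 z_2)$, the definition of $S_{\bt}$ gives $(W-1)^{-1}\sum_{w}\{\mu_w(z_1)-\mu_\cdot(z_1)\}^2 = \sum_{z_2}\sum_{z_2^*} g(z_1 z_2) g(z_1 z_2^*)\,S_{\bt}(z_1 z_2, z_1 z_2^*)/\overline{M}$. Collecting terms, $E\{\widehat{V}(\widehat{\overline{\tau}})\}$ should reproduce the three summands on the right-hand side of Theorem \ref{thm:tauhat_variance} verbatim, but without the trailing $-\Delta$; comparison with that variance expression then yields $E\{\widehat{V}(\widehat{\overline{\tau}})\} = \var(\widehat{\overline{\tau}}) + \Delta$, and nonnegativity of $\Delta$ is immediate from its sum-of-squares form in (\ref{eq:delta}). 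Structurally, the surplus $\Delta$ reflects the variance-reducing negative correlation, across distinct whole-plot levels, between the per-level contributions to $\widehat{\overline{\tau}}$ induced by sampling whole-plots without replacement; because each whole-plot receives a single level $z_1$, the estimator $\widehat{V}$ processes the levels separately and cannot recover this cross-level term.
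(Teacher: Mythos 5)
Your proposal is correct and follows essentially the same route as the paper's proof: condition on the whole-plot assignment, decompose the expected sample (co)variance into the average within-whole-plot randomization covariance plus the between-whole-plot sample covariance of conditional means, evaluate both via standard finite-population sampling formulas (the hypergeometric-type covariance within whole-plots and unbiasedness of the sample covariance under simple random sampling of whole-plots), and compare with Theorem \ref{thm:tauhat_variance}. The only difference is cosmetic: you contract the bilinear form with the coefficients $g(z_1 z_2)$ first and work with the scalar contrasts $A_w(z_1)$, whereas the paper takes expectations of $\widehat{S}(z_1 z_2, z_1 z_2^*)$ entrywise and contracts at the end.
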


\begin{remark} \label{remark::theorem2}
Theorem \ref{thm:bias1} shows that  $\widehat{V}(\widehat{\overline{\tau}})$ is a conservative estimator of $\var(\widehat{\overline{\tau}})$ with a non-negative bias $\Delta$. This property is in line with variance estimators in other situations of randomization based causal inference. Moreover, in the balanced case, by (\ref{eq:delta}), the bias $\Delta$ vanishes when $\overline{\tau}_1 = \cdots =\overline{\tau}_W = \overline{\tau}$. As observed by \citet{zhaosplitplot::2018}, this happens for every treatment contrast $\overline{\tau}$ if and only if between-whole-plot additivity holds, which means
\begin{equation}
\overline{Y}_1(z_1 z_2) - \overline{Y}_1(z_1^* z_2^*) = \cdots = \overline{Y}_W(z_1 z_2) - \overline{Y}_W(z_1^* z_2^*), \label{eq:WP-additivity}  
\end{equation}   
for every pair of treatment combinations $z_1 z_2$ and $z_1^* z_2^*$. A disturbing feature of the variance estimator $\widehat{V}(\widehat{\overline{\tau}})$, however, emerges in the unbalanced case which is the main focus of this paper. Then  $\widehat{V}(\widehat{\overline{\tau}})$ remains biased even if between-whole-plot additivity holds, because by (\ref{eq:WPtau}) and (\ref{eq:connect_tau}), condition (\ref{eq:WP-additivity}) implies $\overline{\tau}_1 = \cdots =\overline{\tau}_W = \overline{\tau}$ and hence
\begin{eqnarray*}
\Delta = \frac{\overline{\tau}^2}{W(W-1) \overline{M}^2} \sum_{w=1}^W (M_w - \overline{M})^2, 
\end{eqnarray*}
which is positive when $M_1, \ldots, M_W$ are not all equal unless $\overline{\tau} = 0$. The situation remains unchanged even under the stronger assumption of strict additivity or homogeneity of treatment effects \citep{neyman::1923}, which enforces the constancy of $Y_i(z_1 z_2) - Y_i(z_1^* z_2^*)$ over $i=1, \ldots, N$ for every pair of treatment combinations $z_1 z_2$ and $z_1^* z_2^*$. 
\end{remark}

This property of $\widehat{V}(\widehat{\overline{\tau}})$ described in Remark \ref{remark::theorem2} is a matter of concern because a requirement typically imposed on a variance estimator in causal inference is that it should become unbiased at least under Neymannian strict additivity, if not under milder versions thereof such as between-whole-plot additivity in the present context. The estimator $\widehat{V}(\widehat{\overline{\tau}})$, obtained by generalizing the arguments in the balanced case fails to meet this requirement when $M_1, \ldots, M_W$ are not all equal. In the rest of the paper, we investigate the existence of a variance estimator that overcomes this difficulty and show how, under wide generality, such an estimator can be obtained by appropriately modifying $\widehat{V}(\widehat{\overline{\tau}})$ as given by (\ref{eq:var_est1}).

\section{A new variance estimator}
\label{sec::new_estimator}

We begin our search for an improved variance estimator by expanding the bias term $\Delta$ defined in (\ref{eq:delta}) as follows:
\begin{equation}
\Delta = (1/N)^2 \left[ \sum_{w=1}^W M_w^2 \overline{\tau}_w^2 - \sum_{w=1}^W \sum_{w^* (\ne w) =1}^W \left \{ M_w M_{w^*} / (W-1) \right \} \overline{\tau}_w \overline{\tau}_{w^*}  \right]. \label{eq:delta_mod}
\end{equation}
Note that in (\ref{eq:delta_mod}), the term $\overline{\tau}_w^2$ is not unbiasedly estimable, but for $w \ne w^*$, $\overline{\tau}_w \overline{\tau}_{w^*}$ allows unbiased estimation. This is because, by (\ref{eq:WPtau}),
\begin{equation}
\overline{\tau}_w^2 = (1/M_w)^2 \sum_{z_1 \in Z_1} \sum_{z_2 \in Z_2} \sum_{z_1^* \in Z_1} \sum_{z_2^* \in Z_2} \sum_{i \in \Omega_w} \sum_{i^* \in \Omega_w} g(z_1 z_2) g(z_1^* z_2^*) Y_i(z_1 z_2) Y_{i^*}(z_1^* z_2^*). \label{eq:tauw_sq} 
\end{equation}
The sums over $i$ and $i^*$ in (\ref{eq:tauw_sq}) include the case $i = i^*$. There is at least one pair of distinct treatment combinations $z_1 z_2$ and $z_1^* z_2^*$ such that $g(z_1 z_2) g(z_1^* z_2^*) \ne 0$ and $Y_i(z_1 z_2) Y_i(z_1^* z_2^*)$ is never observable as unit $i$ cannot be assigned simultaneously to both $z_1 z_2$ and $z_1^* z_2^*$. Hence, $\overline{\tau}_w^2$ does not allow unbiased estimation. On the other hand, for $w \ne w^*$, $\overline{\tau}_w \overline{\tau}_{w^*}$ does not involve terms like $Y_i(z_1 z_2) Y_i(z_1^* z_2^*)$, and is unbiasedly estimable. For each $w$, let $z_{1w}$ denote the level combination of the whole-plot factors assigned to whole-plot $\Omega_w$. Now define
\begin{eqnarray*}
G_w^{\obs} = \sum_{z_2 \in Z_2} g(z_{1w} z_2) \overline{Y}_w^{\obs} (z_{1w} z_2). \label{eq:Gw}
\end{eqnarray*}
The following proposition now gives an unbiased estimator of $\overline{\tau}_w \overline{\tau}_{w^*}$:
\begin{proposition} \label{prop::2}
For $w , w^* = 1, \ldots, W$, $w \ne w^*$, an unbiased estimator of $\overline{\tau}_w \overline{\tau}_{w^*}$ is given by
\begin{eqnarray*}
H_{w w^*} = \frac{W (W-1) G_w^{\obs} G_{w^*}^{\obs}}{r_1(z_{1w}) \left \{ r_1(z_{1w^*}) - \delta(z_{1w}, z_{1w^*}) \right\} },
\end{eqnarray*}
where $\delta(z_{1w}, z_{1w^*})$ is an indicator that equals one if $z_{1w} = z_{1w^*}$ and zero otherwise.
\end{proposition}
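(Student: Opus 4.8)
The plan is to combine a two-step conditioning argument exploiting the nested structure of the split-plot randomization with an inverse-probability-weighting identity. Throughout, write $a_w(z_1) = \sum_{z_2 \in Z_2} g(z_1 z_2) \overline{Y}_w(z_1 z_2)$, so that by (\ref{eq:WPtau}) the whole-plot contrast decomposes as $\overline{\tau}_w = \sum_{z_1 \in Z_1} a_w(z_1)$. The statistic $G_w^{\obs}$ is a single-whole-plot analogue of this sum, but it retains only the term corresponding to the level combination $z_{1w}$ actually assigned to $\Omega_w$. The mismatch between the partial sum $a_w(z_{1w})$ and the full contrast $\overline{\tau}_w$ is precisely what the weighting in $H_{w w^*}$ is engineered to repair, and making this explicit is the organizing idea of the proof.

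Let $\mathcal{W}$ denote the whole-plot assignment. First I would condition on $\mathcal{W}$, which fixes the labels $z_{1w}$ and $z_{1w^*}$ and hence renders the denominator of $H_{w w^*}$ nonrandom. Within whole-plot $\Omega_w$ the sub-plot randomization assigns each $i \in \Omega_w$ to level $z_2$ with probability $r_{w2}(z_2)/M_w$, so $\overline{Y}_w^{\obs}(z_{1w} z_2)$ is conditionally unbiased for $\overline{Y}_w(z_{1w} z_2)$, giving $E\{G_w^{\obs} \mid \mathcal{W}\} = a_w(z_{1w})$. Since the sub-plot randomizations in distinct whole-plots are carried out independently, $G_w^{\obs}$ and $G_{w^*}^{\obs}$ are conditionally independent for $w \ne w^*$, whence $E\{G_w^{\obs} G_{w^*}^{\obs} \mid \mathcal{W}\} = a_w(z_{1w}) a_{w^*}(z_{1w^*})$. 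Dividing by the now-constant denominator yields
\[
E\{H_{w w^*} \mid \mathcal{W}\} = \frac{W(W-1)\, a_w(z_{1w})\, a_{w^*}(z_{1w^*})}{r_1(z_{1w})\{r_1(z_{1w^*}) - \delta(z_{1w}, z_{1w^*})\}}.
\]

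Next I would take the expectation of this over the whole-plot randomization. The crucial input is the joint assignment probability $\pr(z_{1w} = z_1, z_{1w^*} = z_1^*)$, which equals $\{r_1(z_1)/W\}\{r_1(z_1^*)/(W-1)\}$ when $z_1 \ne z_1^*$ and $\{r_1(z_1)/W\}\{(r_1(z_1)-1)/(W-1)\}$ when $z_1 = z_1^*$; both cases are captured uniformly by $r_1(z_1)\{r_1(z_1^*) - \delta(z_1, z_1^*)\}/\{W(W-1)\}$. This is exactly the reciprocal of the weight multiplying $a_w(z_1) a_{w^*}(z_1^*)$ above, so the two factors cancel and
\[
E\{H_{w w^*}\} = \sum_{z_1 \in Z_1} \sum_{z_1^* \in Z_1} a_w(z_1)\, a_{w^*}(z_1^*) = \left(\sum_{z_1 \in Z_1} a_w(z_1)\right)\left(\sum_{z_1^* \in Z_1} a_{w^*}(z_1^*)\right) = \overline{\tau}_w\, \overline{\tau}_{w^*},
\]
which is the desired conclusion.

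The step I expect to demand the most care is the probability bookkeeping, paired with a well-definedness check: on the event $z_{1w} = z_{1w^*}$ two distinct whole-plots receive a common level combination, which forces $r_1(z_{1w}) \ge 2$ and thus keeps $r_1(z_{1w^*}) - \delta(z_{1w}, z_{1w^*}) = r_1(z_{1w}) - 1$ strictly positive, so $H_{w w^*}$ is never an indeterminate $0/0$. I would also be careful to justify the conditional independence of the within-whole-plot randomizations, as that is what allows the cross term to factor into $a_w(z_{1w}) a_{w^*}(z_{1w^*})$; once these points are settled, the remainder is the routine telescoping of the double sum back into $\overline{\tau}_w \overline{\tau}_{w^*}$.
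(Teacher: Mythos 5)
Your proposal is correct and follows essentially the same route as the paper's own proof: condition on the whole-plot assignment, use the conditional independence of the sub-plot randomizations in distinct whole-plots to factor $E\{G_w^{\obs} G_{w^*}^{\obs} \mid \mathcal{W}\}$, and then average over the whole-plot randomization using the joint probability $r_1(z_1)\{r_1(z_1^*) - \delta(z_1, z_1^*)\}/\{W(W-1)\}$, which cancels the weights in $H_{w w^*}$. Your added well-definedness check (that $\delta(z_{1w}, z_{1w^*}) = 1$ forces $r_1(z_{1w}) \ge 2$) is a nice touch the paper leaves implicit, but it does not change the argument.
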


We can now use Proposition \ref{prop::2} to construct a new estimator of $\var(\widehat{\overline{\tau}})$. Consider any symmetric matrix $B = (( b_{w w^*} ))$ of order $W$ such that $b_{ww} = M_w^2$ for $w = 1, \ldots, W$. Now define the variance estimator
\begin{equation}
\widetilde{V}(\widehat{\overline{\tau}}) = \widehat{V}(\widehat{\overline{\tau}}) + (1/N^2)  \sum_{w=1}^W \sum_{w^* (\ne w) =1}^W \left [ b_{w w^*} + \left\{ M_w M_{w^*} / (W-1) \right \} \right] H_{w w^*}, \label{eq:new_estimator} 
\end{equation}
where $\widehat{V}(\widehat{\overline{\tau}})$ is the variance estimator defined in Section \ref{sec:var_taubarhat}, and $H_{w w^*}$ is as defined in Proposition \ref{prop::2}. Then, from (\ref{eq:delta_mod}), (\ref{eq:new_estimator}), Theorem \ref{thm:bias1} and Proposition \ref{prop::2} it is easy to see that
\begin{eqnarray*}
E \left\{ \widetilde{V}(\widehat{\overline{\tau}}) \right\}  &=& \var(\widehat{\overline{\tau}}) + \Delta + (1/N^2)  \sum_{w=1}^W \sum_{w^* (\ne w) =1}^W \left [ b_{w w^*} + \left\{ M_w M_{w^*} / (W-1) \right \} \right]\overline{\tau}_w \overline{\tau}_{w^*} \\
&=& \var(\widehat{\overline{\tau}})  + \widetilde{\Delta},
\end{eqnarray*}
where
\begin{equation}
\widetilde{\Delta} = (1/N^2) \sum_{w=1}^W \sum_{w^*=1}^W b_{w w^*}\overline{\tau}_w \overline{\tau}_{w^*}. \label{eq:newdelta}
\end{equation}

Clearly, the bias $\widetilde{\Delta}$ is nonnegative, making $\widetilde{V}(\widehat{\overline{\tau}})$ a conservative estimator of $\var(\widehat{\overline{\tau}})$ if the matrix $B$ is nonnegative definite. Furthermore, by (\ref{eq:newdelta}), this bias vanishes if and only if $\overline{\tau}_1 = \cdots = \overline{\tau}_W$, when $B$ has each row sum zero, and is a positive semidefinite matrix of rank $W-1$. These facts are summarized in Theorem \ref{thm:zerobias}, which is the main result of this section.

\begin{theorem} \label{thm:zerobias}
Let there exist a positive semidefinite matrix $B = ((b_{w w^*})) $ of order $W$ and satisfying the conditions:
(c1) $b_{ww} = M_w^2, \ w=1, \ldots, W$, (c2) $\sum_{w^*=1}^W b_{w w^*} = 0, \ w=1, \ldots, W$, and (c3) rank$(B) = W-1$. Then the variance estimator $\widetilde{V}(\widehat{\overline{\tau}})$ defined in (\ref{eq:new_estimator}) estimates  $\var(\widehat{\overline{\tau}})$ with a nonnegative bias $\widetilde{\Delta}$ given by (\ref{eq:newdelta}), which vanishes if and only if  $\overline{\tau}_1 = \cdots = \overline{\tau}_W$.
\end{theorem}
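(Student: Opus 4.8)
The plan is to recognize the bias $\widetilde{\Delta}$ in (\ref{eq:newdelta}) as the quadratic form $\widetilde{\Delta} = N^{-2}\, t'Bt$, where $t = (\overline{\tau}_1, \ldots, \overline{\tau}_W)'$ collects the whole-plot contrasts and $\mathbf{1}$ denotes the $W \times 1$ vector of ones. The identity $E\{\widetilde{V}(\widehat{\overline{\tau}})\} = \var(\widehat{\overline{\tau}}) + \widetilde{\Delta}$ is already established in the display preceding the theorem: adding $\Delta$ from (\ref{eq:delta_mod}) to the correction term in (\ref{eq:new_estimator}) cancels the off-diagonal $M_w M_{w^*}/(W-1)$ contributions, while condition (c1) supplies the diagonal entries $b_{ww}=M_w^2$, so the combined expression collapses to the full double sum in (\ref{eq:newdelta}). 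With this in hand, nonnegativity of $\widetilde{\Delta}$ is immediate: since $B$ is positive semidefinite, $t'Bt \ge 0$ for every $t$.

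For the characterization of when the bias vanishes, I would treat the two implications separately. For the ``if'' direction, suppose $\overline{\tau}_1 = \cdots = \overline{\tau}_W$, so that $t = c\,\mathbf{1}$ for some scalar $c$. Condition (c2) states precisely that every row sum of $B$ is zero, i.e.\ $B\mathbf{1} = 0$; hence $t'Bt = c^2\,\mathbf{1}'B\mathbf{1} = 0$ and $\widetilde{\Delta} = 0$.

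For the ``only if'' direction, suppose $\widetilde{\Delta}=0$, equivalently $t'Bt = 0$. The key step is that for a positive semidefinite matrix the quadratic form vanishes only on the null space: writing $B = L'L$ (for instance, via the spectral decomposition), $t'Bt = \|Lt\|^2 = 0$ forces $Lt = 0$ and therefore $Bt = L'Lt = 0$, so $t$ lies in the kernel of $B$. By the rank condition (c3), $\rank(B)=W-1$, the kernel has dimension $W - (W-1) = 1$. By (c2) the nonzero vector $\mathbf{1}$ already lies in this kernel, so it spans the one-dimensional null space; consequently $t$ must be a scalar multiple of $\mathbf{1}$, which says exactly that $\overline{\tau}_1 = \cdots = \overline{\tau}_W$. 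This settles both directions.

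I expect no serious obstacle in this argument, since the probabilistic content---computing $E\{\widetilde{V}(\widehat{\overline{\tau}})\}$ through Proposition \ref{prop::2}---is discharged before the theorem, and what remains is elementary linear algebra on the fixed matrix $B$. The only point demanding a moment's care is the implication $t'Bt = 0 \Rightarrow Bt = 0$, which rests essentially on positive semidefiniteness and would fail for an indefinite $B$; the factorization $B=L'L$ makes it transparent. The genuinely difficult issue---exhibiting a matrix $B$ that simultaneously meets (c1)--(c3)---is a separate constructive question addressed elsewhere in the paper and is not needed here, where such a $B$ is assumed to exist.
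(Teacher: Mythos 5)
Your proof is correct and follows essentially the same route as the paper: the paper's entire proof of this theorem is the in-text computation in Section~\ref{sec::new_estimator} showing $E\{\widetilde{V}(\widehat{\overline{\tau}})\} = \var(\widehat{\overline{\tau}}) + \widetilde{\Delta}$ (via Theorem~\ref{thm:bias1}, Proposition~\ref{prop::2}, the cancellation of the $M_w M_{w^*}/(W-1)$ terms, and (c1) supplying the diagonal), followed by the assertion that the quadratic form $\widetilde{\Delta}$ is nonnegative and vanishes precisely when $\overline{\tau}_1 = \cdots = \overline{\tau}_W$. The only difference is that you make explicit the linear-algebra step the paper leaves as ``clearly'': that $t'Bt=0$ with $B$ positive semidefinite forces $Bt=0$, and that (c2) and (c3) together identify the null space of $B$ as the span of $\mathbf{1}$ --- a worthwhile filling-in, but not a different argument.
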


\begin{remark} \label{remark:new_estimator}
Recall that the between-whole-plot additivity condition (\ref{eq:WP-additivity}) is equivalent to $\overline{\tau}_1 = \cdots =\overline{\tau}_W$ for every treatment contrast. Thus, even when the whole-plot sizes $M_1, \ldots, M_W$ are not all equal, by Theorem \ref{thm:zerobias}, the bias $\widetilde{\Delta}$ vanishes for every treatment contrast if and only if between-whole-plot additivity holds. Thus, if a positive semidefinite matrix $B$ satisfying conditions (c1)-(c3) is available, then Theorem \ref{thm:zerobias} provides us with a variance estimator that possesses properties similar to the one derived by \citet{zhaosplitplot::2018} for the balanced case. However, the issue of existence of such a matrix turns out to be quite challenging, and will be explored in the next section.
\end{remark}

\section{Existence and construction} \label{sec:existence_construction}

We will now study the existence of a positive semidefinite matrix $B$ satisfying conditions (c1)-(c3) stated in Theorem \ref{thm:zerobias} as a purely mathematical problem. 
Without loss of generality, we assume hereafter that
\begin{equation}
M_1 \le M_2 \le \cdots \le M_W. \label{eq:increasingM} 
\end{equation}
To motivate the ideas, consider first the case $W=3$, where conditions (c1) and (c2) determine $B$ uniquely as
\begin{equation}
B = \left [ \begin{array}{ccc}
M_1^2 & (M_3^2 - M_1^2 - M_2^2)/2 & (M_2^2 - M_1^2 - M_3^2)/2  \\
(M_3^2 - M_2^2 - M_1^2)/2 & M_2^2 & (M_1^2 - M_2^2 - M_3^2)/2  \\
(M_2^2 - M_3^2 - M_1^2)/2 & (M_1^2 - M_3^2 - M_2^2)/2 & M_3^2 
\end{array} \right]. \label{eq:Bfor3}
\end{equation}
This matrix is also positive semidefinite and satisfies (c3) if and only if its principal minor, given by the first two rows and columns, is positive. Simplification of this condition and application of (\ref{eq:increasingM}) yields $M_3 < M_1 + M_2$ as the necessary and sufficient condition for $B$ to satisfy (c1)-(c3). This construction of $B$ for $W=3$ raises the following questions with respect to the general case $W \ge 3$:
\begin{itemize}
\item[(a)] Is the condition
\begin{equation}
M_W < M_1 + \cdots + M_{W-1}, \label{eq:MwN-S}
\end{equation}
necessary and sufficient for existence of a positive semidefinite matrix $B$ satisfying (c1)-(c3)?
\item[(b)] If so, then under (\ref{eq:MwN-S}), can one construct such a matrix $B$ by an extension of the form in ( \ref{eq:Bfor3}) to the general case?
\end{itemize}

Later in this section, Theorem \ref{thm:Bexistence} answers (a) in the affirmative. On the other hand, the question in (b) does not allow a conclusive answer. To see why, observe that the most obvious extension of (\ref{eq:Bfor3}) to general $W \ge 3$ is given by $B = (( b_{w w^*} ))$, with
\begin{eqnarray}
&& b_{ww} = M_w^2, w = 1, \ldots, W, \nonumber \\
&& b_{w w^*} = \frac{M_1^2 + \cdots + M_W^2}{(W-1)(W-2)} - \frac{M_w^2 + M_{w^*}^2}{W-2}, \ w, w^* = 1, \ldots, W, \ w \ne w^*. \label{eq:B_gen}
\end{eqnarray} 
The divisors in (\ref{eq:B_gen}) ensure condition (c2) about zero row sums and make it consistent with (\ref{eq:Bfor3}) when $W = 3$. The form (\ref{eq:B_gen}) is also natural because, in keeping with $M_1^2, \ldots, M_W^2$ as the diagonal elements of $B$, it takes the off-diagonal elements as linear combinations of $M_1^2, \ldots, M_W^2$ in a systematic manner. However, unlike the case of $W = 3$, the matrix $B$ given by (\ref{eq:B_gen}) may not be positive semidefinite for $W \ge 4$, even when the condition (\ref{eq:MwN-S}) holds. For instance, if $W = 4$, then this condition holds for both the configurations $(M_1, \ldots, M_4) = (8,8,12,12)$ and $(6,6,14,14)$. The matrix $B$ in (\ref{eq:B_gen}) is positive semidefinite of  rank 3 ($= W-1$) for the first configuration, but has a negative eigenvalue for the second.

The above discussion makes it clear that, in general, the task of obtaining a positive semidefinite matrix $B$ satisfying (c1)-(c3) under condition (\ref{eq:MwN-S}) can be far more complex than what the form (\ref{eq:Bfor3}) arising for $W = 3$ suggests.Theorem \ref{thm:Bexistence} establishes condition (\ref{eq:MwN-S}) as a necessary and sufficient condition for existence of such a matrix. 

\begin{theorem} \label{thm:Bexistence}
Let $W \ge 3$. Then condition  (\ref{eq:MwN-S}), that is, $M_W < M_1 + \ldots + M_{W-1}$, is necessary and sufficient for the existence of a positive semidefinite matrix $B = ((b_{w w^*}))$ of order $W$ and satisfying the conditions
(c1) $b_{ww} = M_w^2, \ w=1, \ldots, W$, (c2) $\sum_{w^*=1}^W b_{w w^*} = 0, \ w=1, \ldots, W$, and (c3) rank$(B) = W-1$. 
\end{theorem}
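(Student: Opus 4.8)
The plan is to recast conditions (c1)--(c3) as a geometric statement about vectors, and then settle necessity by the triangle inequality and sufficiency by induction on $W$. First I would use the fact that a symmetric positive semidefinite matrix of rank $W-1$ factors as $B = XX^{\top}$ with $X$ a $W \times (W-1)$ matrix of full column rank. Writing $x_w^{\top}$ for the $w$th row of $X$, the three conditions translate exactly into a system of vectors $x_1, \dots, x_W \in \mathbb{R}^{W-1}$ with: $\|x_w\| = M_w$ for all $w$ (condition (c1)); $\sum_{w=1}^W x_w = 0$ (condition (c2), since $B\mathbf 1 = XX^{\top}\mathbf 1 = 0$ forces $X^{\top}\mathbf 1 = \sum_w x_w = 0$); and $\rank(X) = W-1$, i.e. the $x_w$ span $\mathbb{R}^{W-1}$ (condition (c3)). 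Thus the theorem reduces to showing that a \emph{closed, full-dimensional} system of vectors with prescribed lengths $M_1,\dots,M_W$ exists if and only if (\ref{eq:MwN-S}) holds.

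For necessity, given such vectors, writing $x_W = -(x_1 + \cdots + x_{W-1})$ and applying the triangle inequality yields $M_W \le M_1 + \cdots + M_{W-1}$. Equality would force $x_1, \dots, x_{W-1}$ to be nonnegative multiples of a single unit vector, so all $W$ vectors would be collinear and their span at most one-dimensional, contradicting $\rank(X) = W-1 \ge 2$ (as $W \ge 3$). Hence the strict inequality (\ref{eq:MwN-S}) is necessary.

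For sufficiency I would induct on $W$. The base case $W=3$ is the non-degenerate triangle in $\mathbb{R}^2$ guaranteed by $M_3 < M_1 + M_2$, whose edge vectors span $\mathbb{R}^2$. For the step, I would merge the two largest lengths: I seek $x_{W-1} + x_W = y$ with $\|y\| = M'$, where $M'$ ranges over the open interval $(M_W - M_{W-1},\, M_W + M_{W-1})$ realizable by a non-degenerate triangle of sides $M_{W-1}, M_W, M'$. Writing $S = M_1 + \cdots + M_{W-2}$, the reduced $(W-1)$-length system $\{M_1, \dots, M_{W-2}, M'\}$ satisfies its own strict inequality (largest $<$ sum of the rest) precisely when $M' \in (2M_{W-2} - S,\, S)$. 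The two admissible ranges for $M'$ overlap exactly because hypothesis (\ref{eq:MwN-S}) says $M_W - M_{W-1} < S$ (together with $M_W + M_{W-1} > 2M_{W-2} - S$, which is immediate); choosing such an $M'$, the induction hypothesis produces $x_1, \dots, x_{W-2}, y \in \mathbb{R}^{W-2}$ with the correct norms, summing to zero, and spanning $\mathbb{R}^{W-2}$. I would then split $y = x_{W-1} + x_W$ via the non-degenerate triangle, putting its out-of-plane leg along a fresh $(W-1)$th coordinate axis so that $\|x_{W-1}\| = M_{W-1}$ and $\|x_W\| = M_W$.

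The rank condition then comes for free: since $y = -(x_1 + \cdots + x_{W-2})$ already lies in the span of $x_1, \dots, x_{W-2}$, those $W-2$ vectors form a basis of $\mathbb{R}^{W-2}$, and adjoining $x_{W-1}$ (which has a nonzero component along the new axis) yields a spanning set of $\mathbb{R}^{W-1}$. I expect the main obstacle to be the inductive step's quantitative bookkeeping---verifying that the ``realizable'' interval for $M'$ meets the ``admissible'' interval $(2M_{W-2}-S,\, S)$, which is exactly where (\ref{eq:MwN-S}) is used in an essential way---together with the re-sorting of $\{M_1, \dots, M_{W-2}, M'\}$, since the merged length $M'$ need not be the largest. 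By contrast, the equality analysis in necessity and the base case are routine.
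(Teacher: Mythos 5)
Your proposal is correct, but it takes a genuinely different route from the paper's own proof. The paper argues directly on the matrix $B$: for necessity it uses the elementwise bounds $|b_{w w^*}| \le M_w M_{w^*}$ implied by (c1) and positive semidefiniteness, and disposes of the equality case $M_W = M_1 + \cdots + M_{W-1}$ by computing $3 \times 3$ principal minors to force $B = b b^{\prime}$ with $b = (M_1, \ldots, M_{W-1}, -M_W)^{\prime}$, which has rank $1$ and so contradicts (c3); for sufficiency it gives a closed-form construction, $A = D \left\{ a_1 x x^{\prime} + a_2 e e^{\prime} + (1 - a_1 - a_2) I \right\} D$ with $D = \text{diag}(M_1, \ldots, M_{W-1})$, bordered by $-Ae$ and $e^{\prime} A e$ to produce $B$, where a separate combinatorial lemma supplies a $\pm 1$ vector $x$ with $|\mu^{\prime} x| < M_W$ and nonnegative constants $a_1, a_2$. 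Your Gram factorization $B = XX^{\top}$ recasts both halves geometrically: necessity becomes the triangle inequality plus the observation that equality forces all the vectors to be collinear (rank $1 < W-1$), which is cleaner than the paper's minor computations; sufficiency becomes an induction on $W$ that merges the two largest lengths into a single length $M^{\prime}$ chosen in the intersection of the triangle-realizability interval $(M_W - M_{W-1}, M_W + M_{W-1})$ and the admissibility interval $(2M_{W-2} - S, S)$ for the reduced system---and your bookkeeping is sound, since the only nontrivial overlap requirement, $M_W - M_{W-1} < S$, is exactly hypothesis (\ref{eq:MwN-S}), and the re-sorting issue is absorbed by stating the induction for unordered systems with ``max less than sum of the rest.'' What the paper's route buys is an explicit, parametrized family of matrices $B$ (indexed by the choices of $x$, $a_1$, $a_2$), which is precisely what the paper exploits afterwards: the minimax discussion of Section 6 and Example 1 optimize $\lambda_{\max}(B)$ over that constructed family. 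Your recursive construction establishes existence equally well and is arguably more transparent, but it produces $B$ only implicitly as a Gram matrix assembled through the induction, so it would not directly support that downstream minimax analysis.
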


The sufficiency part of the proof of Theorem \ref{thm:Bexistence} leads to a construction procedure of the matrix $B$ satisfying conditions (c1)-c(3). If $M_1 = \ldots = M_W (=M, \ \mbox{say})$, then one can simply take $M^2$ at each diagonal position of $B$ and $-M^2/(W-1)$ at each off-diagonal position. Turning next to the case of unequal $M_1 \le \ldots \le M_W$, suppose condition (\ref{eq:MwN-S}) holds. Let ${\mu} = (M_1, \ldots, M_{W-1})^{\prime}$, where the prime denotes transposition, and let $e$ denote the $(W-1) \times 1$ vector of ones. Then the steps involved in the construction of the matrix $B$ are:
\begin{itemize}
\item[]Step 1: Find a vector $x$ with elements $\pm 1$ satisfying the condition
\begin{equation}
|{\mu}^{\prime}x| < M_W. \label{eq:lem_x}
\end{equation}
\item[]Step 2: Find nonnegative constants $a_1$ and $a_2$, satisfying $a_1 + a_2 < 1$ and the following condition: 
\begin{equation}
a_1 \left\{ \left({\mu}^{\prime}x \right)^2 - {\mu}^{\prime}{\mu} \right\} + a_2 \left\{ \left({\mu}^{\prime}e \right)^2 - {\mu}^{\prime}{\mu} \right\} = M_W^2 - {\mu}^{\prime}{\mu}. \label{eq:lem_a}
\end{equation}
\item[]Step 3: Construct the following matrix:  
\begin{eqnarray*}
A = D \left\{ a_1 xx^{\prime} + a_2 ee^{\prime} + (1- a_1 - a_2) I \right \} D, \label{eq:matrix_A}
\end{eqnarray*}
where $x$, $a_1$ and $a_2$ are obtained from steps 1 and 2 above, $I$ is the identity matrix of order $W-1$ and $D = \text{diag}(M_1, \ldots, M_{W-1})$.
\item[]Step 4: Construct matrix $B$ as follows:
\begin{eqnarray*}
B = \left[ \begin{array}{cc} A & - A e \\
 - e^{\prime} A & e^{\prime} A e
\end{array}
\right] \label{eq:B_construct}
\end{eqnarray*}
\end{itemize}
Then $B$ is positive semidefinite of order $W$ and satisfies (c1)-(c3) by the proof of the sufficiency part of Theorem \ref{thm:Bexistence}. A lemma, crucial in this proof, appears in the supplementary material and guarantees the existence of vector $x$ in step 1 and constants $a_1$ and $a_2$ in step 2 under condition (\ref{eq:MwN-S}).

\begin{remark} \label{rem: generality}
It is satisfying that the condition (\ref{eq:MwN-S}) holds under wide generality. It only requires the largest whole-plot to be not too large compared to the others and holds, in particular, when there is a tie about the largest whole-plot. 
\end{remark}


\begin{remark} \label{rem:nonunique}
For $W=3$, one can check that the construction stated above yields the unique $B$ in (\ref{eq:Bfor3}). For $W \ge 4$, however, a positive semidefinite matrix $B$ meeting (c1)-(c3) is non-unique. Indeed, then the above construction itself can yield a wide class of such matrices $B$ considering all vectors $x$ which satisfy (\ref{eq:lem_x}), and for each such $x$, all nonnegative $a_1, a_2$ satisfying $a_1 + a_2 < 1$ and (\ref{eq:lem_a}). Thus, the issue of discriminating among rival choices of $B$ becomes important. Such a discriminating strategy is discussed in Section \ref{sec:minimax}.
\end{remark}


\section{Minimax estimators unbiased under between-whole-plot additivity} \label{sec:minimax}

As seen in Section \ref{sec:existence_construction}, while condition (\ref{eq:MwN-S}) guarantees the existence of matrix $B$ and consequently a variance estimator that is unbiased under between-whole-plot additivity, such a matrix is non-unique. Thus, it is important to define a criterion that can discriminate among possible choices of $B$. Clearly, a good choice should control the bias $\tilde{\Delta} = (1/N^2) \sum_{w=1}^W \sum_{w^*=1}^W b_{w w^*}\overline{\tau}_w \overline{\tau}_{w^*}$ given by (\ref{eq:newdelta}) that is associated with the estimation of $\var(\widehat{\overline{\tau}})$. The hurdle here is that, $\overline{\tau}_1, \ldots, \overline{\tau}_W$ are unknown. Even the idea of minimaxity does not work without further refinement, because $B$ is positive semidefinite, and hence $\tilde{\Delta}$ is unbounded with respect to variation of $\overline{\tau}_1, \ldots, \overline{\tau}_W$ in the $W$-dimensional real space. On the other hand, by (\ref{eq:connect_tau}), multiplication of $\overline{\tau}_1, \ldots, \overline{\tau}_W$ by any nonzero constant only rescales the treatment contrast $\overline{\tau}$, without essentially altering it. We, therefore, consider minimization of $\tilde{\Delta}$ subject to $\sum_{w=1}^W \overline{\tau}_w^2 = 1$. This is motivated by \cite{MDR::2018} who touched upon split-plot designs only in the balanced case. It is easy to see that the above formulation calls for obtaining $B$, subject to (c1)-(c3), so as to minimize $\lambda_{\max}(B)$, the largest eigenvalue of $B$. The following proposition provides us with a lower bound for $\lambda_{\max}(B)$.

\begin{proposition} \label{prop:minimax}
For any positive semidefinite matrix $B$ satisfying (c1)-(c3), a lower bound for $\lambda_{\max}(B)$ is given by $\lambda_0 = \sum_{w=1}^W M_w^2 /(W-1)$, but this bound is unattainable whenever $M_1, \ldots, M_W$ are not all equal.
\end{proposition}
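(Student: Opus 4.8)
The plan is to exploit the eigenstructure that conditions (c1)--(c3) force on $B$. Write $\bm{1}$ for the $W \times 1$ vector of ones. Condition (c2) says precisely that $B\bm{1} = 0$, so $0$ is an eigenvalue of $B$ with eigenvector $\bm{1}$, and condition (c3), rank$(B) = W-1$, forces this eigenvalue to have multiplicity exactly one; since $B$ is positive semidefinite, the remaining $W-1$ eigenvalues $\lambda_1, \ldots, \lambda_{W-1}$ are then all strictly positive. First I would take the trace: by (c1), $\mathrm{tr}(B) = \sum_{w=1}^W M_w^2$, while the trace also equals the sum of all eigenvalues, namely $\lambda_1 + \cdots + \lambda_{W-1}$ (the zero eigenvalue contributing nothing). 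Hence $\lambda_{\max}(B) = \max_j \lambda_j \ge (W-1)^{-1} \sum_{j=1}^{W-1} \lambda_j = (W-1)^{-1}\sum_{w=1}^W M_w^2 = \lambda_0$, which establishes the lower bound.

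For the unattainability I would analyze the equality case. The inequality ``maximum $\ge$ average'' is an equality if and only if $\lambda_1 = \cdots = \lambda_{W-1} = \lambda_0$. In that situation $B$ is completely determined: it annihilates the span of $\bm{1}$ and acts as $\lambda_0$ times the identity on the orthogonal complement $\bm{1}^{\perp}$, so $B = \lambda_0 (I - W^{-1}\bm{1}\bm{1}^{\prime})$, the scaled orthogonal projection onto $\bm{1}^{\perp}$. Reading off its diagonal gives $b_{ww} = \lambda_0 (W-1)/W = W^{-1}\sum_{w=1}^W M_w^2$ for every $w$, i.e.\ every diagonal entry equals the common average of $M_1^2, \ldots, M_W^2$. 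But (c1) requires $b_{ww} = M_w^2$, so equality can hold only if $M_1^2 = \cdots = M_W^2$, that is, only if $M_1, \ldots, M_W$ are all equal. Consequently, whenever the whole-plot sizes are not all equal, the value $\lambda_0$ cannot be attained and $\lambda_{\max}(B) > \lambda_0$ strictly.

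I expect the computation of the bound itself to be immediate once the eigenstructure is identified; the step requiring care is the equality analysis, where one must argue that equality in the ``max $\ge$ average'' inequality pins $B$ down uniquely as the scaled projection and then derive the contradiction from the diagonal constraint (c1). This is the crux of the argument, and it is exactly here that the hypothesis of unequal $M_w$ enters.
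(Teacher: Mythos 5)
Your proof is correct and complete: condition (c2) gives $B\mathbf{1}=0$, (c3) plus positive semidefiniteness give exactly $W-1$ strictly positive eigenvalues, the trace condition from (c1) gives their sum as $\sum_{w=1}^W M_w^2$, hence $\lambda_{\max}(B)\ge \mathrm{tr}(B)/(W-1)=\lambda_0$; and in the equality case the spectral decomposition forces $B=\lambda_0\left(I-W^{-1}\mathbf{1}\mathbf{1}^{\prime}\right)$, whose constant diagonal $W^{-1}\sum_{w=1}^W M_w^2$ is incompatible with (c1) unless $M_1=\cdots=M_W$. Note that the paper as given contains no proof of this proposition in its appendix (only Propositions 1--2, Theorems 1, 2 and 4, and the construction lemma are proved there), so there is no printed argument to compare against; your trace-and-rank argument, with the equality analysis pinning $B$ down as the scaled centering projection, is the natural route and is airtight, and it is presumably what the authors intended in their supplementary material.
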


Given Proposition \ref{prop:minimax}, an analytical solution to the minimaxity problem above seems to be intractable in the unbalanced case. This is anticipated, because a complete characterization of matrices $B$ satisfying (c1)-(c3) is hard, even though in Section \ref{sec:existence_construction}, we were able to outline a general method for constructing such matrices when condition (\ref{eq:MwN-S}) holds. As a practical strategy, therefore, it makes sense to concentrate on matrices $B$ that can be obtained via this method, with a view to minimizing $\lambda_{\max}(B)$ among these matrices. It is reassuring that even then the class of competing matrices $B$ is quite large, as noted in Remark \ref{rem:nonunique}.

\begin{example} \label{ex:Bschool}
Returning to the school example in Sections \ref{sec:intro} and \ref{sec:contrast_estimation}, where we have $N = 40$, $W=4$ and $(M_1, M_2, M_3, M_4) = (8,8,12,12)$, the smallest $\lambda_{\max}(B)$ obtainable via steps 1 through 4 described in Section \ref{sec:existence_construction} is 192, which corresponds to
\begin{eqnarray*}
B = \left [ \begin{array}{rrrr}
64 & 32 & -48 & -48 \\
32 & 64 & -48 & -48 \\
-48 & -48 & 144 & -48 \\
-48 & -48 & -48 & 144
\end{array} \right],
\end{eqnarray*}
as given by $x = (1, 1, -1)^{\prime}$, $a_1 = 0.5$ and  $a_2 = 0$.
\end{example}

\section{Simulation Results} \label{sec:simulations}

Whereas Theorem \ref{thm:zerobias} establishes unbiasedness of $\widetilde{V}(\widehat{\overline{\tau}})$ under (\ref{eq:MwN-S}) and between-whole-plot additivity, and consideration of minimaxity is expected to provide protection under extreme departures from additivity,  it is also important to understand how the bias of $\widetilde{V}(\widehat{\overline{\tau}})$ would compare to that of $\widehat{V}(\widehat{\overline{\tau}})$ under different levels of treatment effect heterogeneity. We now conduct some simulations to study this aspect. We consider the estimation of the interaction effect between factors $F_1$ and $F_2$ in the setting of Example \ref{ex:Bschool}. The unit-level treatment contrast $\tau_i$ equals $\{Y_i(00) - Y_i(01) - Y_i(10) + Y_i(11)\}/4$ for $i = 1, \ldots, 40$ \citep{DPR::2015}. The finite population contrast of interest is $\overline{\tau} = \sum_{i=1}^{40} \tau_i / 40$. The vector of potential outcomes for unit $i$, denoted by $Y_i = \left( Y_i(00),Y_i(01),Y_i(10),Y_i(11) \right)$, is generated using the multivariate normal model:
\begin{eqnarray*}
Y_i \sim N_4 \left( \theta_w, \Sigma_w \right), \ i \in \Omega_w, \ w=1, \ldots, 4,
\end{eqnarray*}
where
$$ \Sigma_w = \sigma_w^2 \left \{ (1-\rho_w) I_4 + \rho_w J_4 \right \} $$
is the covariance matrix for whole-plot $\Omega_w$ that depends on two parameters: the variance $\sigma_w^2$ and correlation $\rho_w$. Matrices $I_n$ and $J_n$ respectively denote the $n$th order identity matrix and the matrix of ones. Eight possible scenarios (listed in Table \ref{tab:settings}) for generating the potential outcomes are considered.

\begin{table}[ht] 
\centering \footnotesize
\caption{Simulation settings} \label{tab:settings}
\begin{tabular}{c|cccc|cccc|rrrr} \hline
Population & $\theta_1$ & $\theta_2$ & $\theta_3$ & $\theta_4$ & $\sigma^2_1$ & $\sigma^2_2$ & $\sigma^2_3$ & $\sigma^2_4$ & $\rho_1$ & $\rho_2$ & $\rho_3$ & $\rho_4$ \\ \hline
I  & (10,5,9,8) & (10,5,9,8) & (10,5,9,8) & (10,5,9,8) & 2 & 2 & 2 & 2 & 1 & 1 & 1 & 1 \\
II  & (10,5,9,8) & (9,7,4,6) & (11,8,7,8) & (8,7,6,9) & 2.5 & 2 & 2 & 3 & .5 & .5 & .5 & .5 \\
III & (10,5,9,8) & (5,9,10,8) & (10,9,8,5) & (10,5,8,9) & 2.5 & 2 & 2 & 3 & 1 & 1 & 1 & 1 \\
IV & (10,5,9,8) & (5,9,10,8) & (10,9,8,5) & (10,5,8,9) & 2.5 & 2 & 2 & 3 & .5 & .5 & .5 & .5 \\
V & (10,5,9,8) & (5,9,10,8) & (10,9,8,5) & (10,5,8,9) & 2.5 & 2 & 2 & 3 & .2 & .4 & .6 & .8 \\
VI & (10,5,9,8) & (5,9,10,8) & (10,9,8,5) & (10,5,8,9) & 2.5 & 2 & 2 & 3 & 0 & 0 & 0 & 0 \\
VII & (10,5,9,8) & (5,9,10,8) & (10,9,8,5) & (10,5,8,9) & 2.5 & 2 & 2 & 3 & -.3 & -.3 & -.3 & -.3  \\
VIII & (10,5,9,8) & (5,9,10,8) & (10,9,8,5) & (10,5,8,9) & 2.5 & 2 & 2 & 3 & -.3 & .3 & -.3 & .3 \\ \hline
\end{tabular}
\end{table}

\begin{figure}[h]
\centering
\begin{tabular}{c}
\includegraphics[width=6 in]{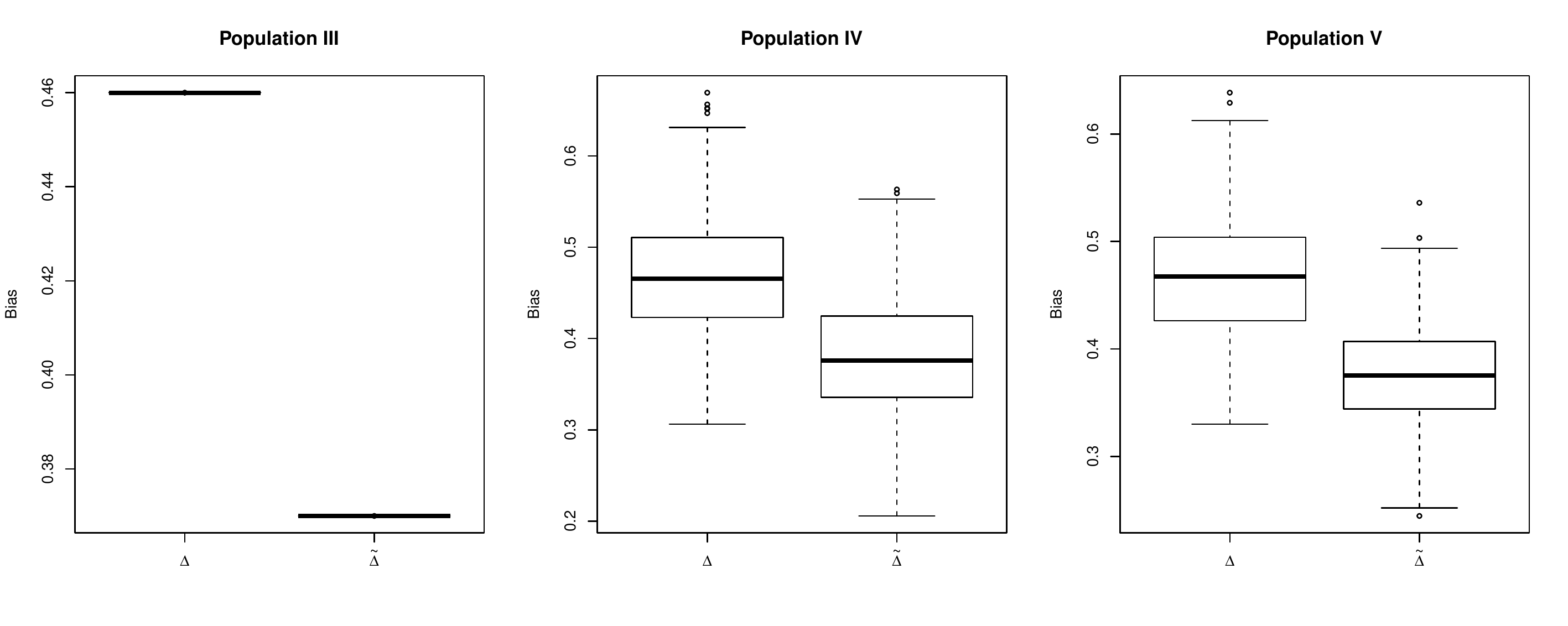} \\
\includegraphics[width=6 in]{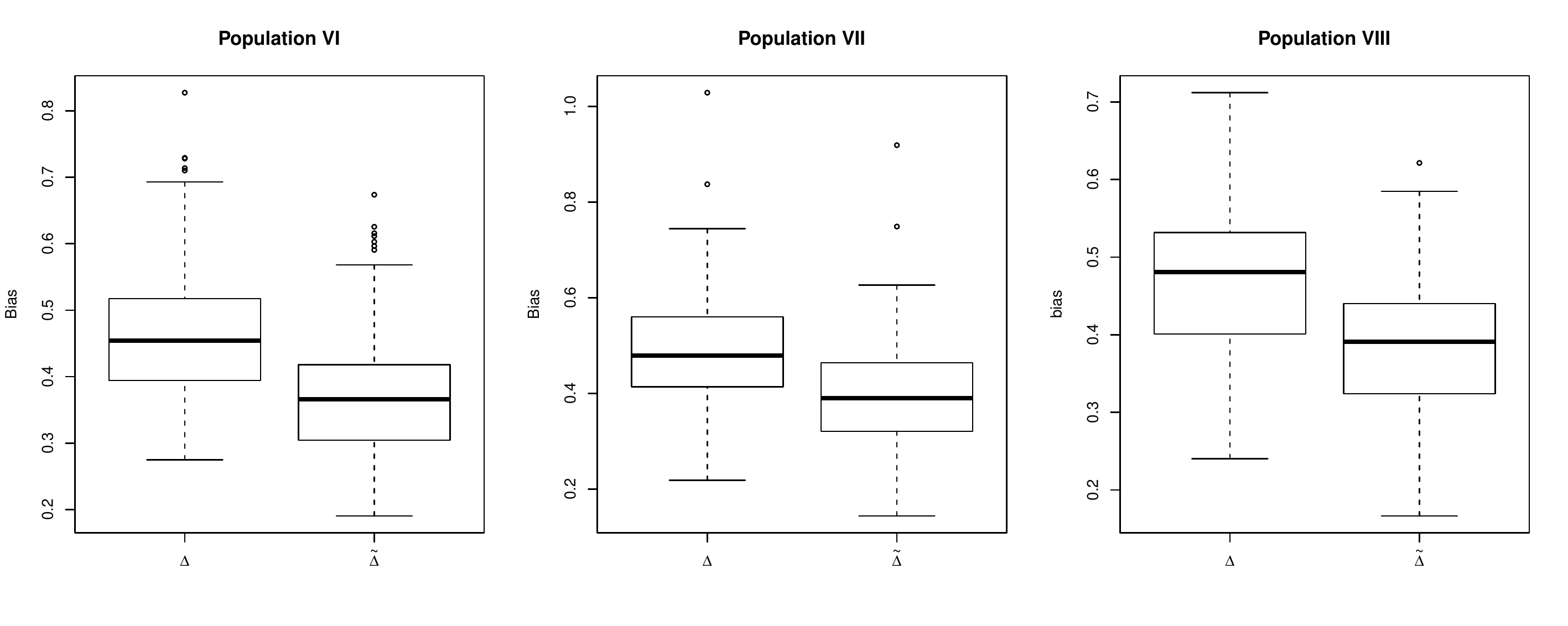}
\end{tabular}
\caption{Boxplots of $\Delta$ and $\widetilde{\Delta}$ for populations III-VIII}
\label{fig:boxplots}
\end{figure}

Strict additivity holds for population I. The potential outcomes for population II are forced to to ensure, via an appropriate command in R, that the whole-plot means $\overline{\tau}_1, \ldots, \overline{\tau}_4$ are always one. Population III generates different $\overline{\tau}_1, \ldots, \overline{\tau}_4$ but guarantees the same $\tau_i$ within each whole-plot. Populations IV through VIII differ only with respect to the correlation parameters that lead to different types of treatment effect heterogeneity. These include all zero correlations in population VI, all negative correlations in population VII, and a mix of positive and negative correlations in population VIII.

From each population, 200 sets of potential outcomes are generated, and the biases of variance estimators $\widehat{V}(\widehat{\overline{\tau}})$ and $\widetilde{V}(\widehat{\overline{\tau}})$ are compared. Note that these biases are $\Delta$ given by (\ref{eq:delta}) and $\tilde{\Delta}$ given by (\ref{eq:newdelta}). We also calculate the bias ratio $\widetilde{\Delta}/\Delta$ for each population. The results for populations I and II are consistent with our results. In both of these cases, $\widetilde{\Delta}$ is always zero and $\Delta$ is always 0.0133. Boxplots of the distributions of $\Delta$ and $\tilde{\Delta}$ for populations III-VIII are shown in Figure \ref{fig:boxplots}. The median bias ratios for these populations are 0.804, 0.811, 0.811, 0.810, 0.822 and 0.817 respectively. The plots and the median bias ratios establish the robustness of the new estimator $\widetilde{V}(\widehat{\overline{\tau}})$ with respect to controlling bias under various forms of treatment effect heterogeneity.



\section*{Acknowledgement}

This work was supported by the J.C. Bose National Fellowship, Government of India, and grants from Indian Institute of Management Calcutta and National Science Foundation, USA.

\section*{Appendix: Proofs of results} \label{Appendix}

In what follows, $E_1$ and $\cov_1$ denote unconditional expectation and covariance with respect to the randomization at the whole-plot stage, while $E_2$ and $\cov_2$ denote expectation and covariance with respect to the randomization at the sub-plot stage, conditional on the whole-plot stage assignment. 

\begin{proof}[Proof of Proposition 1] 

Follows from straightforward conditioning arguments.

\end{proof}

\begin{proof}[Proof of Theorem 1]

Recall that
\begin{eqnarray*}
\overline{U}_w^{\obs}(z_1 z_2) &=& \frac{1}{r_{w2}(z_2)} \sum_{i \in T_{w2}(z_2)} U_i(z_1 z_2), w \in T_1(z_1) \\ 
\overline{U}^{\obs}(z_1 z_2) &=& \frac{1}{r_1(z_1)} \sum_{w \in T_1(z_1)}\overline{U}_w^{\obs}(z_1 z_2).
\end{eqnarray*}
Consequently,
\begin{eqnarray*}
E_2 \left\{ \overline{U}^\obs(z_1 z_2) \right\} &=& \frac{1}{r_1(z_1)} \sum_{w \in T_1(z_1)} \overline{U}_w (z_1 z_2), \quad \mbox{and} \quad \\
E_2 \left\{ \overline{U}^\obs(z_1^* z_2^*) \right\} &=& \frac{1}{r_1(z_1^*)} \sum_{w \in T_1(z_1^*)} \overline{U}_w (z_1^* z_2^*).
\end{eqnarray*}
Defining $\delta(z_1, z_1^*)$ as an indicator that equals one if $z_1 = z_1^*$ and zero otherwise, we have
\begin{eqnarray*}
&& \cov_1 \left[ E_2 \left\{ \overline{U}^\obs(z_1 z_2) \right\}, E_2 \left\{ \overline{U}^\obs(z_1^* z_2^*) \right\} \right] \\
&=& \frac{1}{(W-1) W r_1(z_1)} \sum_{w=1}^W \left \{ \overline{U}_w (z_1 z_2) - \overline{U} (z_1 z_2) \right \} \left \{ \overline{U}_w (z_1^* z_2^*) - \overline{U} (z_1^* z_2^*) \right \} 
\left \{ W \delta(z_1, z_1^*) - r_1(z_1) \right \}, \\
&=& \frac{1}{W \overline{M} r_1(z_1)} S_{\bt}(z_1 z_2, z_1^* z_2^*) \left \{ W \delta(z_1, z_1^*) - r_1(z_1) \right \}. 
\end{eqnarray*}

\noindent Next,
\begin{eqnarray*}
&& \cov_2 \left\{ \overline{U}^\obs(z_1 z_2), \overline{U}^\obs(z_1^* z_2^*) \right\} \\
&=& \delta(z_1, z_1^*) \sum_{w \in T_1(z_1)} \frac{  S_{\win,w}(z_1 z_2, z_1^* z_2^*) \left \{ M_w \delta(z_2, z_2^*) - r_{w2}(z_2) \right \}} {M_w r_{w2}(z_2) \left\{ r_1(z_1) \right\}^2 }. 
\end{eqnarray*}
so that
\begin{eqnarray*}
&& E_1 \left[ \cov_2 \left\{ \overline{U}^\obs(z_1 z_2), \overline{U}^\obs(z_1^* z_2^*) \right\} \right] \\
&=& \delta(z_1, z_1^*) \sum_{w=1}^W \frac{  S_{\win,w}(z_1 z_2, z_1^* z_2^*) \left \{ M_w \delta(z_2, z_2^*) - r_{w2}(z_2) \right \}} {W M_w r_1(z_1) r_{w2}(z_2) }. 
\end{eqnarray*}
Hence,
\begin{eqnarray}
&& \cov \left\{ \overline{U}^\obs(z_1 z_2), \overline{U}^\obs(z_1^* z_2^*) \right\}  \nonumber \\
&=& \delta(z_1, z_1^*) \left\{  \frac{S_{\bt}(z_1 z_2, z_1^* z_2^*)}{\overline{M} r_1(z_1)} - \sum_{w=1}^W \frac{  S_{\win,w}(z_1 z_2, z_1^* z_2^*)}{W M_w r_1(z_1)}  \right\} \nonumber \\ 
&+& \delta(z_1, z_1^*) \delta(z_2, z_2^*) \sum_{w=1}^W \frac{S_{\win,w}(z_1 z_2, z_1^* z_2^*)}{W r_1(z_1) r_{w2}(z_2)} - \frac{S_{\bt}(z_1 z_2, z_1^* z_2^*)}{N}. \label{eq:cov}
\end{eqnarray}

Since $\widehat{\overline{\tau}} = \sum_{z_1 \in Z_1} \sum_{z_2 \in Z_2} g(z_1 z_2) \overline{U}^\obs(z_1 z_2)$, we have that
\begin{eqnarray*}
\var( \widehat{\overline{\tau}}) = \sum_{z_1 \in Z_1} \sum_{z_2 \in Z_2} \sum_{z_1^* \in Z_1} \sum_{z_2^* \in Z_2}  g(z_1 z_2)  g(z_1^* z_2^*)\cov \left\{ \overline{U}^\obs(z_1 z_2), \overline{U}^\obs(z_1^* z_2^*) \right\}.
\end{eqnarray*}

Substituting the expression of $\cov \left\{ \overline{U}^\obs(z_1 z_2), \overline{U}^\obs(z_1^* z_2^*) \right\}$ from (\ref{eq:cov}) in the above, the first two terms in the expression of $\var( \widehat{\overline{\tau}})$ in Theorem 1 follow immediately. The last term can be explained as
\begin{eqnarray*}
&& \sum_{z_1 \in Z_1} \sum_{z_2 \in Z_2} \sum_{z_1^* \in Z_1} \sum_{z_2^* \in Z_2} g(z_1 z_2) g(z_1^* z_2^*) S_{\bt}(z_1 z_2, z_1^* z_2^*) / N \\
&=& \frac{ \overline{M} }{(W-1)N}  \sum_{w=1}^W \left[ \sum_{z_1 \in Z_1} \sum_{z_2 \in Z_2} g(z_1 z_2) \left\{\overline{U}_w (z_1 z_2) - \overline{U} (z_1 z_2)  \right\} \right]^2 \\
&=& \frac{1}{W(W-1)} \sum_{w=1}^W \left[ \sum_{z_1 \in Z_1} \sum_{z_2 \in Z_2} g(z_1 z_2) \left\{ (M_w/\overline{M})\overline{Y}_w (z_1 z_2) - \overline{Y} (z_1 z_2)  \right\} \right]^2 \\
&=& \frac{1}{W(W-1)} \sum_{w=1}^W \left\{ (M_w/\overline{M})\overline{\tau}_w - \overline{\tau} \right \}^2 = \Delta.
\end{eqnarray*}

\end{proof}

\begin{proof}[Proof of Theorem 2]

\begin{eqnarray*}
E_2 \left\{ \widehat{S}(z_1 z_2, z_1 z_2^*) \right\} &=& \frac{1}{r_1(z_1) }\sum_{w \in T_1(z_1)} \cov_2 \left\{ \overline{U}_w^\obs(z_1 z_2), \overline{U}_w^\obs(z_1 z_2^*) \right\} \\
&+& \frac{1}{r_1(z_1)-1} \sum_{w \in T_1(z_1)} \left \{ \overline{U}_w(z_1 z_2) - \widetilde{\overline{U}}(z_1 z_2) \right\} \left \{ \overline{U}_w(z_1 z_2^*) - \widetilde{\overline{U}}(z_1 z_2^*) \right\},
\end{eqnarray*}
where $\widetilde{\overline{U}}(z_1 z_2) = \sum_{w \in T_1(z_1)} \overline{U}_w(z_1 z_2) / r_1(z_1)$, and $\widetilde{\overline{U}}(z_1 z_2^*)$ is similarly defined. For any $w \in T_1(z_1)$,
\begin{eqnarray*}
\cov_2 \left\{ \overline{U}_w^\obs(z_1 z_2), \overline{U}_w^\obs(z_1 z_2^*) \right\} = \frac{ S_{\win,w}(z_1 z_2, z_1 z_2^*) \left\{ M_w \delta(z_2, z_2^*) - r_{w2}(z_2) \right\}}{M_w r_{w2}(z_2)}.
\end{eqnarray*}
Thus,
\begin{eqnarray*}
E \left\{ \widehat{S}(z_1 z_2, z_1 z_2^*) \right\} &=& E_1 E_2 \left\{ \widehat{S}(z_1 z_2, z_1 z_2^*) \right\} \\
&=& \sum_{w=1}^W \frac{ S_{\win,w}(z_1 z_2, z_1 z_2^*) \left\{ M_w \delta(z_2, z_2^*) - r_{w2}(z_2) \right\}}{W M_w r_{w2}(z_2)} \\
&+& \frac{1}{W-1} \sum_{w=1}^W \left\{ \overline{U}_w(z_1 z_2) - \overline{U}(z_1 z_2)  \right\} \left\{ \overline{U}_w(z_1 z_2^*) - \overline{U}(z_1 z_2^*)  \right\} \\
&=& \sum_{w=1}^W \frac{ S_{\win,w}(z_1 z_2, z_1 z_2^*) \left\{ M_w \delta(z_2, z_2^*) - r_{w2}(z_2) \right\}}{W M_w r_{w2}(z_2)} + \frac{S_{\bt}(z_1 z_2, z_1 z_2^*)}{\overline{M}}. 
\end{eqnarray*}
The result stated in Theorem 2 is evident from the above.
 \end{proof}

\begin{proof}[Proof of Proposition 2]
Because $w \ne w^*$, by (5) and the definition of $G_w^{\text{obs}}$, conditionally on the assignment of the whole-plots to the level combinations of the whole-plot factors, $G_w^{\obs}$ and $G_{w^*}^{\obs}$ are independent and the conditional expectation of their product equals
$$ \left\{ \sum_{z_2 \in Z_2} g(z_{1w} z_2) \overline{Y}_w (z_{1w} z_2)  \right\} \left\{ \sum_{z_2 \in Z_2} g(z_{1w^*} z_2) \overline{Y}_{w^*} (z_{1w^*} z_2)  \right\}. $$  
The result now follows from (9), noting that the pair $(z_{1w}, z_{1w^*})$ equals any $(z_1, z_1^*)$ with probability
$ \frac{r_1(z_1) \left\{ r_1(z_1^*) - \delta(z_1, z_1^*) \right\}}{W(W-1)}$.
\end{proof}

\bigskip

\begin{proof}[Proof of the necessity part of Theorem 4]
Suppose a positive semidefinite matrix $B = (b_{w w^*})$ of order $W$ and satisfying (c1)-(c3) exists. Then by (c1),
\begin{equation}
|b_{w w^*}| \le M_w M_{w^*}, \quad w, w^* = 1, \ldots, W, \quad w \ne w^*. \label{eq:bound}
\end{equation}
Hence using (c2), (\ref{eq:bound}), and (c1) in succession,
\begin{equation}
0 = b_{W1} + \ldots + b_{WW} \ge b_{WW} - M_W (M_1 + \ldots + M_{W-1}) = M_W (M_W - M_1 - \ldots - M_{W-1}), \label{eq:thm4_2}
\end{equation}
which implies $M_W \le M_1 + \ldots + M_{W-1}$. If possible, let equality hold here. Then equality holds throughout in (\ref{eq:thm4_2}), and invoking (\ref{eq:bound}), this yields
\begin{equation}
b_{Ww} = - M_W M_w, \quad w=1, \ldots, W-1. \label{eq:eq:thm4_3}
\end{equation}
For any $w, w^*$ such that $w < w^* < W$, by (c1) and (\ref{eq:eq:thm4_3}), the principal minor of $B$, as given by its $w$th, $w^*$th and $W$th rows and columns turns out to be $-M_W^2(b_{w w^*} - M_w M_{w^*})^2$. Because this principal minor is nonnegative due to positive semidefinite-ness of $B$, it follows that $b_{w w^*} = M_w M_{w^*}$. This, in conjunction with (c1) and (\ref{eq:eq:thm4_3}), implies that $B = b b^{\prime}$, where $b = (M_1, \ldots, M_{W-1}, -M_W)^{\prime}$. But then $\rank(B) = 1 < W-1$, and (c3) is violated. This contradiction proves the necessity of the condition $M_W < M_1 + \ldots + M_{W-1}$. 
\end{proof}

\bigskip

To prove the sufficiency part of Theorem 4, we first state a lemma that is crucial in this proof and also leads to the algorithm for construction of the symmetric positive semidefinite matrix $B$ of order $W$ that satisfies conditions (c1)-(c3).

\begin{lemma}\label{lemma:sufficiency}
Let $W \ge 3$. Suppose $M_1, \ldots, M_W$ are not all equal and $M_1 \le \ldots \le M_W$, as per (19). Let $e$ denote the $(W-1) \times 1$ vector of ones and $\mu = (M_1, \ldots, M_{W-1})^{\prime}$.
\begin{itemize}
\item[(a)] Then there exists a $(W-1) \times 1$ vector $x$ with elements $\pm 1$ such that $|\mu^{\prime}x| < M_W$.
\item[(b)] If, in addition, condition (21) holds, i.e., $M_W < M_1 + \ldots + M_{W-1}$, then, with the vector $x$ as in (a) above, there exist nonnegative constants $a_1, a_2$ satisfying $a_1 + a_2 < 1$, such that equation (24) holds, i.e.,
\begin{eqnarray*}
a_1 \left\{ \left(\mu^{\prime}x \right)^2 - \mu^{\prime}\mu \right\} + a_2 \left\{ \left(\mu^{\prime}e \right)^2 - \mu^{\prime}\mu \right\} = M_W^2 - \mu^{\prime}\mu. \label{eq:lem_a}
\end{eqnarray*}
\end{itemize}
\end{lemma}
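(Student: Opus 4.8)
The lemma has two independent-looking parts, and my plan is to treat them in sequence since part (b) explicitly uses the vector $x$ produced in part (a). For part (a), I want a sign vector $x \in \{\pm 1\}^{W-1}$ making $|\mu^\prime x| < M_W$, i.e., making the signed sum $\pm M_1 \pm \cdots \pm M_{W-1}$ small in absolute value. The natural approach is a greedy/balancing construction: order the $M_w$ (they are already sorted by \eqref{eq:increasingM}) and assign signs so as to keep the running partial sum as close to zero as possible. Concretely, I would process $M_{W-1}, M_{W-2}, \ldots, M_1$ in decreasing order, at each step choosing the sign that moves the accumulated total toward $0$. A standard invariant for this greedy scheme is that after incorporating a term of size $M$, the absolute value of the partial sum never exceeds $M$; since the terms are decreasing, the final partial sum is bounded by the \emph{smallest} term $M_1$. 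Thus $|\mu^\prime x| \le M_1 \le M_W$. To get the strict inequality I would need to rule out the boundary case $|\mu^\prime x| = M_W$, which, given $M_1 \le M_W$ and the fact that $M_1,\ldots,M_W$ are not all equal, should force $M_1 = \cdots = M_{W-1} = M_W$ — a contradiction — or else can be avoided by flipping a sign. I expect this strictness bookkeeping to be the fussiest part of (a).

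\textbf{Part (b): solving the linear equation with sign and magnitude constraints.} Once $x$ is fixed, equation \eqref{eq:lem_a} is a single linear equation in the two unknowns $a_1, a_2$. Write $p = (\mu^\prime x)^2 - \mu^\prime\mu$, $q = (\mu^\prime e)^2 - \mu^\prime\mu$, and $r = M_W^2 - \mu^\prime\mu$, so the equation reads $a_1 p + a_2 q = r$. My plan is to locate a solution inside the open simplex region $\{a_1 \ge 0,\ a_2 \ge 0,\ a_1 + a_2 < 1\}$. The key sign facts I would establish first are: $q = (\mu^\prime e)^2 - \mu^\prime\mu > 0$ (since $(\sum M_w)^2$ strictly exceeds $\sum M_w^2$ when $W - 1 \ge 2$, i.e.\ $W \ge 3$, using positivity of the $M_w$); $p \le 0$ by part (a), because $(\mu^\prime x)^2 < M_W^2 \le$ ... — more usefully, $(\mu^\prime x)^2 \le \mu^\prime\mu$ is not automatic, so I must be careful here and instead use $|\mu^\prime x| < M_W$ directly; and the right-hand side $r = M_W^2 - \mu^\prime\mu$, whose sign is governed by condition \eqref{eq:MwN-S}.

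\textbf{The main obstacle and how I expect to resolve it.} The crux is showing the target value $r$ is reachable by a convex-type combination with the required budget $a_1 + a_2 < 1$. The clean route is the choice $a_1 = 0$ whenever it works: then I need $a_2 = r/q$ with $0 \le r/q < 1$, i.e.\ $0 \le M_W^2 - \mu^\prime\mu < (\mu^\prime e)^2 - \mu^\prime\mu$, equivalently $\mu^\prime\mu \le M_W^2 < (\mu^\prime e)^2 = (M_1 + \cdots + M_{W-1})^2$. The upper bound is exactly condition \eqref{eq:MwN-S}, which is why (b) invokes it, and this is the one place the hypothesis \eqref{eq:MwN-S} is genuinely needed. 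The lower bound $\mu^\prime\mu \le M_W^2$, however, can fail, and that is precisely where I anticipate the real work: when $M_W^2 < \mu^\prime\mu$ the pure $a_2$ solution leaves the feasible region, and I must recruit $a_1$ with its negative coefficient $p$ to pull the combination down. I would then argue that because $p < 0$ (which follows from $|\mu^\prime x| < M_W$ forcing $(\mu^\prime x)^2 < M_W^2$, combined with a comparison to $\mu^\prime\mu$) and $q > 0$, the line $a_1 p + a_2 q = r$ crosses the open triangle: I would exhibit the two endpoints of the feasible segment explicitly and check, via the bounds from (a) and \eqref{eq:MwN-S}, that $r$ lies strictly between the values of $a_1 p + a_2 q$ attained there. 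Verifying these endpoint inequalities — reconciling the sign of $p$, the size of $\mu^\prime\mu$ relative to both $M_W^2$ and $(\mu^\prime x)^2$, and the strict budget $a_1 + a_2 < 1$ — is the technical heart of the lemma, and I would expect the argument to split into the two cases $M_W^2 \ge \mu^\prime\mu$ and $M_W^2 < \mu^\prime\mu$ accordingly.
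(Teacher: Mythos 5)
Your part (a) contains a genuine gap: the invariant on which the whole argument rests is false. You claim that when the terms are processed in decreasing order with greedy signs, ``after incorporating a term of size $M$, the absolute value of the partial sum never exceeds $M$,'' hence the final signed sum is at most $M_1$. For the greedy step $S_k = S_{k-1} - \mathrm{sign}(S_{k-1})\, a_k$ to preserve the bound $|S_k| \le a_k$ one needs $|S_{k-1}| \le 2a_k$, which the sorted order does not provide. Concretely, take $W=4$ and $(M_1,M_2,M_3,M_4) = (1,1,10,10)$, so $\mu = (1,1,10)^{\prime}$: processing $10,1,1$ greedily gives partial sums $10, 9, 8$; after incorporating the term of size $1$ the partial sum is $9 > 1$, and the final sum is $8$, far exceeding $M_1 = 1$. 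Consequently your strictness step --- ``$|\mu^{\prime}x| \le M_1 \le M_W$, and equality would force all $M_w$ equal'' --- has no foundation (here $|\mu^{\prime}x| = 8 < M_W = 10$ does hold, but not for the reason you give). The greedy strategy itself is salvageable, but with a different invariant: $|S_k| = \bigl|\,|S_{k-1}| - a_k\,\bigr| < \max\{|S_{k-1}|, a_k\}$ \emph{strictly} whenever $S_{k-1} \ne 0$, so the final sum is strictly below the largest term $M_{W-1} \le M_W$ unless some partial sum vanishes along the way; and if the last vanishing occurs just before the final term, the final sum equals $M_1 < M_W$, using ``not all equal.'' You would need to carry out this (or an equivalent) analysis. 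The paper avoids greedy bookkeeping altogether: it cancels pairs of entries equal to $M_W$ (via the integer $h$), then balances a prefix of the remaining entries against a suffix, and rules out the bad case by a contradiction ($M_{w_1+1} \ge M_W$ would contradict the maximality of $h$).

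Your part (b), by contrast, is essentially sound and, once written out, is the paper's argument in different packaging. With $p = (\mu^{\prime}x)^2 - \mu^{\prime}\mu$, $q = (\mu^{\prime}e)^2 - \mu^{\prime}\mu$, $r = M_W^2 - \mu^{\prime}\mu$, part (a) gives $p < r$ and condition (21) gives $r < q$; the case split you anticipate is then simpler than you fear: if $r \ge 0$ take $(a_1,a_2) = (0, r/q)$, noting $0 \le r/q < 1$; if $r < 0$ take $(a_1,a_2) = (r/p, 0)$, noting $p < r < 0$ gives $0 < r/p < 1$. In both cases $a_1, a_2 \ge 0$, $a_1 + a_2 < 1$, and the equation holds. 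The paper expresses the same fact ($\phi_1 < \phi < \phi_2$ in its notation) through an interpolation with a weight $\xi$; the content is identical. Since (b) needs the vector $x$ from (a), however, the lemma as a whole is not established until part (a) is repaired.
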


\medskip

\begin{proof} [Proof of Lemma \ref{lemma:sufficiency}]

\noindent Part (a). It will suffice to show that there exist $x_1, \ldots, x_{W-1}$, each $+1$ or $-1$, such that $|\sum_{w=1}^{W-1} M_w x_w| < M_W$. One can then simply take $x = (x_1, \ldots, x_{W-1})^{\prime}$. 
Recall that $M_1 \le M_2 \le \ldots \le M_W$, as per (19). Because $M_1 \ldots,M_W$ are not all equal, this yields 
\begin{equation}
M_1 < M_W.			\label{eq:lem1}
\end{equation}
Let $h$ be the largest nonnegative integer such that
\begin{equation} 
 M_{W-2h} = M_W. \label{eq:lem2}
\end{equation}
By (\ref{eq:lem1}), $W-2h \ge 2$. If $h \ge 1$, define
\begin{equation}
x_{W-h} = \ldots = x_{W-1} = 1, \quad x_{W-2h} = \ldots = x_{W-h-1} = -1, \label{eq:lem3}
\end{equation}
and note that
\begin{equation}
\sum_{w=W-2h}^{W-1} M_w x_w = 0, \label{eq:lem4}
\end{equation}
because by (19) and (\ref{eq:lem2}), $M_w = M_W$ for $w=W-2h, \ldots, W-1$. Now, if $W-2h = 2$, then with $x_1 = 1$ and $x_2, \ldots, x_{W-1}$ as in (\ref{eq:lem3}), $|\sum_{w=1}^{W-1} M_w x_w| = M_1 < M_W$, by (\ref{eq:lem1}) and (\ref{eq:lem4}).

Next, let $W-2h \ge 3$. Then, by (19), 
$$ \sum_{w=2}^{W-2h-1} M_w \ge (W-2h-2)M_2 \ge M_1. $$
Let $w_1$ be the largest integer in $\{1, \ldots, W-2h-2\}$ such that $\sum_{w=1}^{w_1} M_w \le \sum_{w=w_1+1}^{W-2h-1} M_w$. If $w_1 = W-2h-2$, then $\sum_{w=1}^{W-2h-2} M_w \le M_{W-2h-1}$. So, with $x_1 = \ldots = x_{W-2h-2} = -1$, $x_{W-2h-1} = 1$ and $x_{W-2h}, \ldots, x_{W-1}$ as in (\ref{eq:lem3}) when $h \ge 1$,
$$ \left|\sum_{w=1}^{W-1} M_w x_w \right| = M_{W-2h-1} - \sum_{w=1}^{W-2h-2} M_w < M_{W-2h-1} \le M_W, $$ by (\ref{eq:lem4}).

Now, suppose $1 \le w_1 \le W-2h-3$, in which case $W-2h \ge 4$. Then,
$$ \sum_{w=1}^{w_1} M_w \le \sum_{w=w_1 +1}^{W-2h-1} M_w, \quad \mbox{and} \quad \sum_{w=1}^{w_1+1} M_w > \sum_{w=w_1 +2}^{W-2h-1} M_w. $$
As a result, either 
$$ \mbox{(i)} \quad \left| \sum_{w=w_1 +1}^{W-2h-1} M_w - \sum_{w=1}^{w_1} M_w \right| < M_W 
\quad \mbox{or (ii)} \quad \left| \sum_{w=1}^{w_1+1} M_w - \sum_{w=w_1+2}^{W-2h-1} M_w \right| < M_W. $$ 
Else, 
$$ \sum_{w=w_1 +1}^{W-2h-1} M_w - \sum_{w=1}^{w_1} M_w \ge M_W, \quad \mbox{as well as} \quad \sum_{w=1}^{w_1+1} M_w - \sum_{w=w_1+2}^{W-2h-1} M_w \ge M_W. $$
Adding these two inequalities, we have $M_{w_1+1} \ge M_W$, which is impossible by the definition of $h$, because $w_1+1 \le W-2h-2$.

If (i) holds, then the choice $x_1 = \ldots = x_{w_1} = -1$, $x_{w_1+1} = \ldots = x_{W-2h-1} = 1$, coupled with $x_{W-2h}, \ldots, x_{W-1}$ as in (\ref{eq:lem3}) when $h \ge 1$, entails $\left| \sum_{w=1}^{W-1} M_w x_w \right| < M_W$, by (\ref{eq:lem4}). Similarly, if (ii) holds, then the choice $x_1 = \ldots = x_{w_1+1} = -1$, $x_{w_1+2} = \ldots = x_{W-2h-1} = 1$, coupled with $x_{W-2h}, \ldots, x_{W-1}$ as in (\ref{eq:lem3}) when $h \ge 1$, entails $\left| \sum_{w=1}^{W-1} M_w x_w \right| < M_W$.

\bigskip

\noindent Part (b): Let $M_W < M_1 + \ldots + M_{W-1} = \mu^{\prime} e$, and let the vector $x$ be as in part (a) above, so that $| \mu^{\prime} x| < M_W$. Let $\phi_1 = \left( \mu^{\prime} x \right)^2 -  \mu^{\prime} \mu$, $\phi = M_W^2 - \mu^{\prime} \mu$ and $\phi_2 =  \left( \mu^{\prime} e \right)^2 - \mu^{\prime} \mu$. Then $\phi_1 < \phi < \phi_2$, as $|\mu^{\prime} x| < M_W < \mu^{\prime} e$. As a result, there exist constants $\tilde{a}_1$ and $\tilde{a}_2$ such that $0 \le \tilde{a}_1, \tilde{a}_2 < 1$ and $\tilde{a}_1 \phi_1 < \phi < \tilde{a}_2 \phi_2$. Let $\xi = \left( \tilde{a}_2 \phi_2 - \phi \right)/ \left( \tilde{a}_2 \phi_2 - \tilde{a}_1 \phi_1 \right)$. Then $0 < \xi < 1$. Hence, if we take $a_1 = \tilde{a}_1 \xi$, $a_2 = \tilde{a}_2 (1-\xi)$, then $a_1, a_2 \ge 0$ and $a_1 + a_2 < 1$, because $a_1 + a_2$ is a weighted average of $\tilde{a}_1$ and $\tilde{a}_2$, both of which are less than one. Moreover, $a_1 \phi_1 + a_2 \phi_2 = \phi$ by the definition of $\xi$, i.e., $a_1$ and $a_2$ satisfy (24).
\end{proof}

\bigskip

\begin{proof}[Proof of the sufficiency part of Theorem 4] 

In view of Lemma \ref{lemma:sufficiency}, this follows from steps 1-4 in Section 5, noting that (i) the matrix $A$ there is positive definite, and hence the matrix $B$ there is positive semidefinite of rank $W-1$ with each row sum zero, (ii) $A$ has diagonal elements $M_1^2, \ldots, M_{W-1}^2$, and (iii) by (24),
$$ e^{\prime} A e = a_1 (\mu^{\prime} x)^2 + a_2 (\mu^{\prime} e)^2 + (1-a_1 - a_2) \mu^{\prime} \mu = M_W^2,$$
because $De = \mu$.

\end{proof}

\bibliographystyle{apalike}
\bibliography{ACE}

\end{document}